\documentclass{article}
\usepackage{arxiv}
\usepackage{listings}
\usepackage{xcolor}
\usepackage{amssymb}
\usepackage{amsmath}
\usepackage{amsthm}
\usepackage{mathtools}
\usepackage{hyperref}
\usepackage{cleveref}

\theoremstyle{plain}
\newtheorem{theorem}{Theorem}[section]
\newtheorem{proposition}[theorem]{Proposition}
\newtheorem{lemma}[theorem]{Lemma}

\theoremstyle{definition}
\newtheorem{definition}[theorem]{Definition}
\newtheorem{setup}[theorem]{Setup}
\newtheorem{remark}[theorem]{Remark}
\numberwithin{equation}{section}
\crefname{setup}{Setup}{Setups}

\input{preamble.sty}

\begin{document}

\title{Partially deterministic sampling for compressed sensing with denoising guarantees}

\author{Yaniv Plan\thanks{Department of Mathematics, University of British Columbia, Vancouver, BC, Canada.}
\and Matthew S. Scott\footnotemark[1]
\and Ozgur Yilmaz\footnotemark[1]\thanks{CNRS -- PIMS International Research Laboratory}}

\begin{abstract}
We study compressed sensing when the sampling vectors are chosen from the rows of a unitary matrix.  In the literature, these sampling vectors are typically chosen randomly; the use of randomness has enabled
major empirical and theoretical advances in the field. However, in practice there are often certain crucial sampling vectors, in which case practitioners will depart
from the theory and sample such rows deterministically. 
In this work, we derive an optimized sampling scheme
for Bernoulli selectors which naturally combines random and deterministic
selection of rows, thus rigorously deciding which rows should be sampled deterministically.
This sampling scheme provides measurable improvements in
image compressed sensing for both generative and sparse priors when compared
to with-replacement and without-replacement sampling schemes, as we show
with theoretical results and numerical experiments. Additionally, our
theoretical guarantees feature improved sample complexity bounds compared
to previous works, and novel denoising guarantees in this setting.
\end{abstract}
\keywords{Compressed sensing; Bernoulli sampling; Optimal sampling; Denoising; Generative models.}

\maketitle
\begingroup
\renewcommand{\thefootnote}{}
\NoHyper
\footnotetext{{\bf Author roles:} Authors listed in alphabetic order. MS is the lead author of this manuscript.}
\endNoHyper
\addtocounter{footnote}{-1}
\endgroup

\section{Introduction}
\label{loc:body.introduction}
Compressed sensing (CS) enables the recovery of high-dimensional signals with low-dimensional structure from far fewer measurements than their ambient dimension, a paradigm that led to ``compressive signal acquisition" and has transformed fields like medical and seismic imaging, computational photography, and radar and remote sensing. 
Consider a signal $\boldsymbol{x}_0 \in \mathbb{K}^n$ (where $\mathbb{K}$ is either $\mathbb{R}$ or $\mathbb{C}$) that lies in or near a prior set $\mathcal{Q} \subseteq \mathbb{K}^n$ with effective dimensionality much smaller than $n$. 
We aim to reconstruct $\boldsymbol{x}_0$ from noisy measurements of the form $\boldsymbol{y} = A \boldsymbol{x}_0 + \boldsymbol{\epsilon}$, where $A \in \mathbb{K}^{m \times n}$ is a CS matrix with $m \ll n$, and $\boldsymbol{\epsilon} \sim \mathcal{N}(0, \sigma^2 I)$ represents Gaussian noise. 
We consider subsampled unitary CS matrices, which are of the form $A = SF$ for $F \in \mathbb{K}^{n \times n}$ a unitary matrix (e.g., Fourier), and $S \in \mathbb{R}^{m \times n}$ a random \emph{sampling matrix}, a matrix each row of which has exactly one entry that is non-zero and identical. We call the rows of $F$ \emph{measurement vectors}. The sampling matrix $S$ specifies a selection (and a uniform scaling) of these measurement vectors, and each selected measurement vector performs a noisy measurement on the true signal $\boldsymbol{x}_0$ via a noisy inner product.

This model matches applications like magnetic resonance imaging (MRI), where measurements are constrained to the Fourier domain~\cite{lustigSparseMRIApplication2007, chauffertVariableDensitySampling2014}. 
Early CS research recognized that not all measurements are equally informative; for example, low-frequency Fourier components are critical for natural images. 
This insight led to the development of \emph{optimized sampling schemes}, also known in the literature as \emph{optimal} or \emph{near-optimal} sampling schemes, where measurements are prioritized based on their local coherences. (Local coherences quantify the alignment of each measurement vector with the prior set $\mathcal{Q}$~\cite{candesSparsityIncoherenceCompressive2007, puyVariableDensityCompressive2011, krahmerStableRobustSampling2014}.)
Optimized sampling schemes fall into the broader category of \emph{variable-density sampling schemes}, which are randomized sampling schemes parameterized by probability vectors.

The simplest class of sampling distributions is with-replacement sampling,
where measurement vectors are repeatedly drawn independently at random
according to some fixed probability vector.
Under this scheme, even the measurement vectors with the highest sampling probabilities can still be entirely missed, a phenomenon that can be significant in certain edge-case settings, as we show in a toy example in \Cref{loc:body.toy_example}. 

In an effort to address this limitation, Puy, Vandergheynst and Wiaux~\cite{puyVariableDensityCompressive2011} considered variable-density sampling with Bernoulli selectors, where each measurement vector is included according to its own independent Bernoulli random variable. For this type of sampling distribution, it is possible
to set the probability weight of certain measurement vectors to $1$, thereby including them deterministically. Bernoulli sampling schemes therefore naturally incorporate deterministic measurements into random sampling schemes.
Block sampling emerged as another solution, where measurement vectors are arranged into blocks, and different blocks have their own sampling distributions~\cite{bigotAnalysisBlockSampling2016, polakPerformanceBoundsGrouped2015}. This paradigm allows for blocks of deterministically sampled measurements to be incorporated in a structured fashion. 

Beyond the risk of entirely missing crucial measurements, with-replacement sampling has a second drawback: it allows for the repeated sampling of some measurement vectors, which yield no additional information about the true signal beyond their limited denoising effect. Both Bernoulli and without-replacement sampling are means of avoiding redundant selections~\cite{hoppeSamplingStrategiesCompressive2023}.
However, without-replacement schemes still cannot guarantee the inclusion of high-coherence measurement vectors, which may be missed with non-negligible probability in certain regimes.
Optimized Bernoulli sampling stands out as a powerful solution, avoiding both redundant rows and missed critical measurements, while still allowing the sampling distribution to adapt to the local coherences.
This hybrid approach is particularly advantageous in settings like MRI, where some low-frequency measurements are known to be essential, and their inclusion should therefore not be left up to chance. 
Moreover, Bernoulli sampling is widely adopted in CS literature due to its simplicity and compatibility with theoretical analysis (e.g., \cite{rudelsonSparseReconstructionFourier2008}).

In this paper, we advance the theory and practice of optimized Bernoulli sampling in a number of ways. 
\begin{itemize}
\item \textbf{Denoising guarantees.} 
    We establish the first denoising bounds for Bernoulli sampling under Gaussian noise and variable probability weights, 
    extending the with-replacement theory of~\cite{planDenoisingGuaranteesOptimized2025} 
    to this setting.
\item \textbf{Closed-form probability weights.} 
We give a fully explicit expression for the optimized Bernoulli probability weights (\Cref{loc:optimized_bernoulli_weights.statement}), avoiding the need to solve an auxiliary equation for a normalizing constant as in~\cite{adcockUnifiedFrameworkLearning2024} or an optimization problem as in~\cite{puyVariableDensityCompressive2011}. The closed-form expression simplifies implementation and makes the dependence on the local coherences and the number of measurements transparent.
\item \textbf{Improved sample-complexity bounds.} 
    Our bounds improve on those of optimized with-replacement sampling~\cite{planDenoisingGuaranteesOptimized2025} by replacing
    $\|\boldsymbol{\alpha}\|_2$ with $L(\boldsymbol{\alpha},m)$ in the sample complexity and noise bound (see \Cref{loc:local_coherence.statement} and \Cref{loc:optimized_bernoulli_weights.statement} for the definitions of $\boldsymbol{\alpha}$ and $L$ respectively).  We show that $L(\boldsymbol{\alpha},m) \leq \|\boldsymbol{\alpha}\|_2$; the enhancement comes from removing the dependence of the sample complexity on the local coherences of saturated measurements. When the number of measurements $m$ is large, we demonstrate in~\Cref{fig:numerical_complexity_bounds.pdf} that the improvement in the bound can be empirically significant in realistic settings.   
\item \textbf{General recovery guarantees for arbitrary Bernoulli weights.} We provide a recovery and denoising theory for any Bernoulli selector sampling scheme, with the optimized scheme obtained as the minimizer of the associated complexity bound. 
\end{itemize}

\textbf{Motivation for Bernoulli selectors.}
Bernoulli sampling schemes occupy a useful middle ground: they allow crucial rows to be included deterministically when their probability weights equal 1, while still enabling flexible randomization for the remaining measurements. This flexibility makes Bernoulli selectors especially attractive in applications such as MRI, where certain low-frequency measurements are indispensable and should not be left to chance.

\textbf{Structure and flow.}
Section~\ref{loc:body.main_result} presents our main recovery guarantee,
\Cref{loc:optimized_bernoulli_cs_on_union_of_subspaces.statement}, together with the
optimized Bernoulli weights $\boldsymbol{w}^\circ$ and the associated quantity $L(\boldsymbol{\alpha},m)$ that
appears in the sample-complexity bound. The theoretical framework leading to this result
is developed in the subsequent sections. This ordering is intentional: it allows the 
optimized sampling scheme and its consequences to be seen upfront, while the 
subsequent sections gradually build the theoretical framework that justifies it.
With this in mind, the rest of the paper is organized as follows: 
Section~\ref{loc:body.general_signal_recovery} develops a general RIP-based recovery
theory for an arbitrary sampling matrix: it introduces the truncation operator, establishes
a signal recovery bound under an RIP assumption (\Cref{loc:signal_recovery_with_subsampled_unitary_matrix_with_gaussian_noise_on_union_of_subspaces.bernoulli}), and derives general upper
bounds on the noise term $\| \tilde D\,T(SD\boldsymbol{\alpha})\|_2$ (\Cref{loc:simple_bound_on_the_gaussian_noise_error_factor.statement}). In
Section~\ref{loc:body.variable:density_sampling}, we specialize this framework to
Bernoulli selectors by introducing the tight Bernoulli complexity $\gamma(\boldsymbol{\alpha},\boldsymbol{w})$,
proving an RIP bound for Bernoulli sampling (\cref{loc:rip_of_bernoulli_selector_non:uniform_sampling_matrix_on_a_union_of_subspaces.statement}), and combining it with the
results of \Cref{loc:body.general_signal_recovery} to obtain the general recovery and denoising guarantees
(\Cref{loc:cs_with_bernoulli_and_denoising_on_unions_of_subspaces.statement}) for arbitrary
Bernoulli weights $\boldsymbol{w}$. Section~\ref{loc:body.optimized_sampling} then introduces a
simplified upper bound on the Bernoulli complexity, shows that $\boldsymbol{w}^\circ$ uniquely minimizes it, and
identifies $L(\boldsymbol{\alpha},m)$ as the resulting optimized complexity value; it also contributes to the derivation of the
bounds on the noise error term required for the main theorem.
A toy example illustrating the behaviour of the optimized scheme is given in
Section~\ref{loc:body.toy_example}, and numerical comparisons with alternative sampling
strategies appear in Section~\ref{loc:body.numerics}. The proofs of the results stated in
these sections are collected in Section~\ref{loc:body.proofs}.

Much of the exposition follows an analogous structure to that of~\cite{planDenoisingGuaranteesOptimized2025},
and readers familiar with that work may find the associated commentary helpful in interpreting
the results developed here.

\textbf{Notation.}
$\sphere{n}$ is the unit sphere in $\mathbb{R}^n$ or $\mathbb{C}^n$ depending
on context. 
The simplex $\Delta^{n-1}$ is defined as:
\begin{equation*}
\Delta^{n-1} = \left\{ \boldsymbol{p} \in \mathbb{R}^n \mid p_i \geq 0, \sum p_i = 1 \right\}.
\end{equation*}
We let $B_2$ be the $\ell_2$ ball, and $B_2^n$ be the
$\ell_2$ ball of dimension $n$ specifically. We say that a set $\mathcal{T}$ in
a real or complex vector space is a \emph{cone} when $\forall \lambda  \in (0, \infty), \lambda \mathcal{T} = \mathcal{T}$, where $\lambda \mathcal{T} := \{\lambda t | t  \in \mathcal{T}\}$.
The self-difference $\mathcal{V} - \mathcal{V}$ is $\left\{ \boldsymbol{v}_1 - \boldsymbol{v}_2 \mid \boldsymbol{v}_1, \boldsymbol{v}_2 \in \mathcal{V} \right\}$.

Let $\mathbb{R}_+$ be the non-negative real numbers, $\mathbb{R}_{++}$
the strictly positive real numbers, and $\mathbb{N}$ the natural numbers
starting at $1$. For a function $f$, we denote its range by $\range(f)$,
and its restriction to a subset $C$ of its domain by $f|_C$. If
$f$ is a vector and $C$ a subset of its support, then $f|_C$ is the
$|C|$-dimensional vector that is the restriction of $f$ to $C$.
Throughout this paper, we fix the field $\mathbb{K}$ to be either $\mathbb{C}$ or
$\mathbb{R}$.

For a vector $\boldsymbol{u}$, its components are indexed as $u_i$.
We denote by
$\{\boldsymbol{e}_i\}_{i  \in [n]}$ the canonical basis of $\mathbb{R}^n$.
For $\ell \in \mathbb{N}$, the set $[\ell]$ comprises
the integers from $1$ to $\ell$. 
We denote by $\supp \boldsymbol{v}$ the support of $\boldsymbol{v}$, and by
$\boldsymbol{v}^{.2}$ the entry-wise square of $\boldsymbol{v}$.
We say that a vector is \emph{increasing} when $i  \ge j  \implies v_i  \ge v_j$
and \emph{decreasing} when $i  \ge j  \implies v_i  \le v_j$.

For an $m  \times  n$ matrix $A$, we denote its adjoint (the conjugate
transpose) by
$A^*$, its entries by
$A_{i,j}$, and we denote by $\boldsymbol{a}_i$ the conjugate transpose of its rows,
i.e., $A  =  \sum_{i = 1}^m \boldsymbol{e}_i \boldsymbol{a}_i^*$.
The Euclidean norm
of a vector $\boldsymbol{u} \in \measfield^n$ is
$\|\boldsymbol{v}\|_2 := \sqrt{\boldsymbol{v}^* \boldsymbol{v}}$.
The
operator norm of a matrix $A$ is $\|A\|:= \sup_{\boldsymbol{u} \in B_2^n} \|Au\|_2$.
For matrices, given a vector $\boldsymbol{d}  \in \mathbb{R}^n$, we
denote by $\Diag(\boldsymbol{d})$ the $n  \times n$ diagonal
matrix with diagonal entries $\boldsymbol{d}$. The identity matrix in
$\mathbb{R}^m$ is labeled $I_m$.
Projection onto a closed set $\mathcal{T} \subseteq \mathbb{R}^n$ is denoted by $\proj_{\mathcal{T}}$, mapping a vector
$\boldsymbol{x}$ to the element in $\mathcal{T}$ that minimizes the Euclidean
distance, with ties broken by choosing the lexicographically first (meaning
that vectors are ordered by
their first
entry, then second, then third, and so on).

We use $\langle \cdot, \cdot \rangle$ to denote
the inner product in $\mathbb{K}^n$; specifically, the canonical inner product
when $\measfield$ is $\mathbb{R}$, and the complex inner
product $\langle \boldsymbol{u}, \boldsymbol{v}\rangle = \boldsymbol{u}^* \boldsymbol{v}$ when $\measfield$ is $\mathbb{C}$. We also denote by
$\mathcal{R}\langle  \cdot ,  \cdot \rangle$ the real part of the inner
product (which is just the canonical inner product when $\measfield$ is $\mathbb{R}$).

We employ the notation $a \lesssim b$ if $a \leq Cb$ where $C$ is an absolute
constant, potentially different for each instance. 
We denote $X \sim \mathcal{N}(\mu, \sigma^2)$ to be the Gaussian random variable
with mean $\mu$ and variance $\sigma^2$. A random Gaussian vector $g \sim \mathcal{N}(0, I_m)$ is a random vector in $\mathbb{R}^m$ which has i.i.d.
$\mathcal{N}(0, 1)$ Gaussian entries. A complex random Gaussian vector 
$g \sim \mathcal{N}(0, I_m)$, $g  \in \mathbb{C}^m$, is a random
vector in $\mathbb{C}^m$ with entries that have real and imaginary parts individually $\iid \mathcal{N}(0, 1)$.

\section{Main result}
\label{loc:body.main_result}
To set up our main result, we introduce a number of mathematical objects.  We begin with the \emph{local coherence} which quantifies the alignment of a measurement vector with a cone.
\begin{definition}[Local coherence]
\label{loc:local_coherence.statement}
The \emph{local coherence} of a vector $\boldsymbol{\phi} \in \measfield^n$ with respect to a cone $\mathcal{T} \subseteq \field^n$ is defined as
\begin{equation*}
\alpha_{\mathcal{T}}(\boldsymbol\phi) := \sup_{\boldsymbol{x} \in \mathcal{T} \cap B_2} \lvert \boldsymbol\phi^* \boldsymbol{x} \rvert.
\end{equation*}
The \emph{local coherences} of a unitary matrix $F \in \measfield^{n \times n}$ with respect to a cone $\mathcal{T} \subseteq \measfield^n$ is defined as the vector $\boldsymbol{\alpha} \in \mathbb{R}^n_{+}$ with entries $\alpha_j := \alpha_\mathcal{T}(\boldsymbol{f}_j)$, where $\boldsymbol{f}_j$ is the conjugate transpose of the $j^{th}$ row of $F$.
\end{definition}
We note that the local coherence vector can be hard to compute depending on the structure of $\mathcal{T}$.  We give two methods to heuristically approximate it in Section \ref{loc:body.numerics}.

Recall that we consider CS matrices of the form $SF$, where $F$ is a $n  \times n$ unitary matrix (for example, $F$ can be the Fourier
matrix), and $S$ is a \emph{sampling matrix} with Bernoulli selectors, which we now
define.
\begin{definition}[Bernoulli selector sampling matrix]
\label{loc:bernoulli_selector_sampling_matrix.statement}
Let $\boldsymbol{w} \in [0,1]^n \cap m\Delta^{n-1}$ be a \emph{probability weight vector} and $A  \in \mathbb{R}^{n \times n}$ be a random diagonal matrix with entries $A_{i,i} = \sqrt{ \frac{n}{m} } \xi_i$, where $\xi_i \iid \mathrm{Ber}(w_i)$. We define the \emph{Bernoulli sampling matrix} to be the $\tilde{m} \times n$ matrix $S$ obtained by removing the rows of $A$ that are $0$.
\end{definition}
Note that a Bernoulli sampling matrix has a random number of rows $\tilde{m}$ 
satisfying $\mathbb{E} \tilde{m} = m$.
We outline the problem of robust signal recovery in greater detail.
\begin{setup}[{\cite[Setup 2.5]{planDenoisingGuaranteesOptimized2025}}]
\label{loc:setup_bernoulli_signal_recovery.statement}

\textbf{Prior and true signal}

Let $\boldsymbol{x}_{0} \in \field^n$ be a signal, and 
$\mathcal{Q} \subseteq \field^n$ be a prior set
such that $\mathcal{Q}-\mathcal{Q} \subseteq \mathcal{T}$, for $\mathcal{T} \subseteq \mathbb{R}^n$ a union of $M$
subspaces each of dimension at most $\ell$. We think of $\boldsymbol{x}_0$ as being close to $\mathcal{Q}$;  $\boldsymbol{x}^\perp := \boldsymbol{x}_0 - \proj_{\mathcal{Q}} \boldsymbol{x}_0$ quantifies the model mismatch. 
\smallskip 

\textbf{Measurement acquisition}

Let $F  \in \measfield^{n  \times n}$ be a unitary matrix. Suppose that $\boldsymbol{\alpha}$
is the vector of local coherences of $F$ with respect to
$\mathcal{T}$. Let $S$ be a possibly random
\emph{sampling matrix}, and define the measurements
\begin{equation*}
\boldsymbol{b} = SF\boldsymbol{x}_{0} + \boldsymbol{\eta},
\end{equation*}
where the noise is $\boldsymbol{\eta} =  \frac{\sigma \boldsymbol{g}}{\sqrt{m}}$
with 
$\boldsymbol{g} \sim \mathcal{N}(0, I_m)$ being a Gaussian vector in $\measfield^m$. Here,
$\mathbb{E}[\|\boldsymbol{\eta}\|_2^2]$ is $\sigma^2$ when $\mathbb{K}$
is $\mathbb{R}$ and $2 \sigma^2$ when $\mathbb{K}$ is $\mathbb{C}$. Thus, $\sigma$
determines the size of the noise. With this normalization, $\frac{1}{\sigma}$
can be thought of as the Signal-to-Noise Ratio (SNR) up to an absolute constant. 

\smallskip 

\textbf{Signal reconstruction}

Knowing only $\boldsymbol{b}, S$ and $F$, we (approximately) recover the true signal $\boldsymbol{x}_0$ by (approximately) solving the following optimization problem:
\begin{equation}
\label{eq:opt:reconstruct}
\minimize_{\boldsymbol{x} \in \mathcal{Q}} \, \lVert \widetilde{D} SF\boldsymbol{x}-\widetilde{D}\boldsymbol{b} \rVert_{2}^2
\end{equation}
where $\widetilde{D} := \sqrt{\frac{m}{n}}\Diag(S\boldsymbol{d})$ is a diagonal preconditioning
matrix for some $\boldsymbol{d} \in \mathbb{R}^n$.
Note that in terms of $D := \Diag(\boldsymbol{d})$, the preconditioned CS
matrix $\widetilde{D}SF$ can be written as $SDF$. This demonstrates that the
preconditioning is, in fact, an 
element-wise scaling operation applied to individual rows of $F$.
The use of the preconditioner $\widetilde{D}$ is prevalent in the literature
(e.g., see \cite{krahmerStableRobustSampling2014}), and seems to be necessary to obtain the RIP (see \Cref{loc:rip.statement} in \Cref{loc:body.general_signal_recovery}) on the CS matrix when the sampling probabilities are non-uniform. Intuitively, it is chosen so as to ``balance out" the effect of the sampling probabilities on the expected size of the measurement vectors.
We approximately solve the optimization problem \Cref{eq:opt:reconstruct} and obtain an $\hat{\boldsymbol{x}}\in \mathcal{Q}$ such that 
\begin{equation}
\label{eq:opt:recov}
\lVert  \widetilde{D}SF\hat{\boldsymbol{x}}-\widetilde{D}\boldsymbol{b} \rVert_{2}^2 \leq  \min_{\boldsymbol{x} \in \mathcal{Q}}\lVert  \widetilde{D}SF\boldsymbol{x} - \widetilde{D}\boldsymbol{b} \rVert_2^2+\varepsilon
\end{equation}
for some small optimization error $\varepsilon>0$.
\end{setup}

Given the above setup, our objective is to bound the error 
$\|\boldsymbol{x}_0- \hat{\boldsymbol{x}}\|_2$
in terms of the noise level $\sigma$, the optimization error $\varepsilon$,
and the distance of the true signal $\boldsymbol{x}_0$ to the prior
$\mathcal{Q}$ ($\|\boldsymbol{x}^\perp\|$, the approximation error). 

Our setting is identical to~\cite[Setup 2.5]{planDenoisingGuaranteesOptimized2025}, with the only difference that we let $S$ be a sampling matrix with Bernoulli selectors. As will be made apparent, the theoretical challenge (and associated benefit) of Bernoulli selector sampling lies in the possibility of measurement vectors being sampled with probability $1$. We call such measurement vectors \emph{saturated}.

We specify the \emph{optimized} sampling scheme for Bernoulli selectors.
Recall that in this work, a vector $\boldsymbol{\alpha}$ is ``increasing" when $i \le j \implies \alpha_i \le \alpha_j$.
\begin{definition}[Optimized Bernoulli weights]
\label{loc:optimized_bernoulli_weights.statement}
Let $m < n$ and $\boldsymbol{\alpha} \in \mathbb{R}_{++}^n$ be a local coherence vector which, without loss of generality, we assume has increasing entries (otherwise, re-index the vector $\boldsymbol{\alpha}$). Let
\begin{equation*}
R^2(j;\boldsymbol{\alpha}, m) =  \frac{m \|\boldsymbol{\alpha}|_{[j]}\|_2^2}{j-(n-m)},
\end{equation*}
and
\begin{equation}
\label{eq:j}
J = 
\max\left\{  j \in [n]: \frac{m \alpha_j^2}{R^2(j; \boldsymbol{\alpha}, m)} <  1\right\}.
\end{equation}
Then for $L^2(\boldsymbol{\alpha}, m) := R^2(J; \boldsymbol{\alpha}, m)$, the optimized probability weights are
\begin{equation*}
w^\circ_j = \min\left( \frac{m \alpha^2_j}{L^2(\boldsymbol{\alpha}, m)}, 1\right).
\end{equation*}
\end{definition}
The optimized sampling vector is chosen so as to guarantee a small sample complexity in \Cref{loc:optimized_bernoulli_cs_on_union_of_subspaces.statement}.
\begin{remark}
\Cref{loc:optimized_bernoulli_weights.statement} assumes that the entries of
$\boldsymbol{\alpha}$ are ordered increasingly; for any local coherence vector that is not
already ordered, we evaluate $L(\boldsymbol{\alpha}, m)$ by first re-ordering
$\boldsymbol{\alpha}$ increasingly and then applying the same definition.
\end{remark}
\begin{remark}
The quantity $L(\boldsymbol{\alpha},m)$ from Definition~\ref{loc:optimized_bernoulli_weights.statement}
will later be identified (in Section~\ref{loc:body.optimized_sampling}) as the optimized
sample-complexity factor obtained by minimizing a simplified Bernoulli complexity measure $\eta(\boldsymbol{\alpha},\boldsymbol{w})$
over all feasible Bernoulli probability weights with the explicit weights
$\boldsymbol{w}^\circ$ from Definition~\ref{loc:optimized_bernoulli_weights.statement} arising as the
(unique) minimizer. 
\end{remark}

\begin{proposition}[Norm of the optimized probability weights]
\label{loc:norm_of_the_optimized_probability_weights.statement}
In \Cref{loc:optimized_bernoulli_weights.statement},

\begin{enumerate}
\item $J$ is the number of unsaturated measurement vectors.
\item $\sum_{i = 1}^n w_i^\circ =  m$, that is, in expectation $m$ rows are sampled.
\end{enumerate}
\end{proposition}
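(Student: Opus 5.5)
The plan is to reduce everything to a statement about the partial sums $S_j := \|\boldsymbol{\alpha}|_{[j]}\|_2^2 = \sum_{i\le j}\alpha_i^2$ and the shifted index $t_j := j-(n-m)$. With this notation $R^2(j;\boldsymbol{\alpha},m) = mS_j/t_j$. For $t_j>0$, the defining condition in \eqref{eq:j} reads
\[
\alpha_j^2\, t_j < S_j, \qquad\text{equivalently}\qquad S_{j-1} > \alpha_j^2\, t_{j-1}.
\]
Only indices $j>n-m$ are meaningful, since there $t_j>0$. I will first check that $J$ is well defined and that $J\ge n-m+1$. At $j=n-m+1$ we have $t_j=1$. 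Because $m<n$, the sum $S_j$ contains at least one earlier strictly positive term besides $\alpha_j^2$, so $\alpha_j^2 < S_j$ and the condition holds there.

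\textbf{Key step: the admissible indices form an initial segment of $\{n-m+1,\dots,n\}$.} I will show that once the condition fails at some $j$ with $t_j>0$, it fails at $j+1$. Failure at $j$ means $S_j\le \alpha_j^2 t_j$. Using that $\boldsymbol{\alpha}$ is increasing and $t_j\ge 0$,
\[
S_{j+1} = S_j+\alpha_{j+1}^2 \le \alpha_j^2 t_j + \alpha_{j+1}^2 \le \alpha_{j+1}^2 (t_j+1) = \alpha_{j+1}^2 t_{j+1},
\]
which is failure at $j+1$. Hence the condition holds exactly for $j\in\{n-m+1,\dots,J\}$ and fails for all $j>J$. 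I expect this monotonicity to be the main obstacle. The delicate points are the sign of $t_j$, which is why the indices $j\le n-m$ with nonpositive denominators must be excluded, and the use of the increasing ordering.

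\textbf{Item 1.} We have $w_j^\circ<1$ iff $m\alpha_j^2 < L^2 = R^2(J)$.
\begin{itemize}
\item For $j\le J$: $\alpha_j\le\alpha_J$, and $m\alpha_J^2<R^2(J)$ by the definition of $J$, so $w_j^\circ<1$.
\item For $j>J$ (only possible if $J<n$): the condition fails at $J+1$, so $S_{J+1}\le \alpha_{J+1}^2 t_{J+1}$. Subtracting $\alpha_{J+1}^2$ gives $S_J\le \alpha_{J+1}^2 t_J$. Hence $R^2(J)=mS_J/t_J\le m\alpha_{J+1}^2\le m\alpha_j^2$, so $w_j^\circ=1$.
\end{itemize}
Thus the unsaturated indices are exactly $1,\dots,J$, and there are $J$ of them.

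\textbf{Item 2.} This is then a direct computation using item 1:
\[
\sum_{i=1}^n w_i^\circ = \frac{m S_J}{L^2} + (n-J) = \frac{m S_J\, t_J}{m S_J} + n - J = J-(n-m)+n-J = m.
\]
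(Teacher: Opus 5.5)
Your proof is correct and follows the paper's argument: you get $w^\circ_J<1$ from the definition of $J$, and $w^\circ_{J+1}=1$ by subtracting $\alpha_{J+1}^2$ from the failed condition at $J+1$; monotonicity of $\boldsymbol{\alpha}$ then shows the unsaturated indices are exactly $1,\dots,J$, and the sum is direct. Your explicit check that $J\ge n-m+1$ (so $R^2(J;\boldsymbol{\alpha},m)>0$ and the divisions are legitimate) tightens a point the paper's nonemptiness check at $j=1$ leaves open, while your initial-segment ``key step'' is correct but not actually used in either item.
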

We defer \hyperlink{loc:norm_of_the_optimized_probability_weights.proof}{the proof} of \Cref{loc:norm_of_the_optimized_probability_weights.statement} to \Cref{loc:body.proofs.properties_of_the_optimized_sampling_scheme}. We are now ready to state our main recovery guarantee for optimized Bernoulli sampling.
\begin{theorem}[Optimized Bernoulli CS on union of subspaces]
\label{loc:optimized_bernoulli_cs_on_union_of_subspaces.statement}
Under~\Cref{loc:setup_bernoulli_signal_recovery.statement}, for some $\delta>0$, with
$L^2(\boldsymbol{\alpha}, m)$ as in~\Cref{loc:optimized_bernoulli_weights.statement},
suppose that
\begin{equation}
\label{eq:samp:comp}
m \gtrsim L^2(\boldsymbol{\alpha}, m) \left( \log \ell + \log M + \log \frac{20}{\delta}\right).
\end{equation}
With the sampling matrix $S$ governed by the optimized probability weights $\boldsymbol{w}^\circ$, and $D = \diag(\boldsymbol{d})$ where $d_i = \sqrt{\frac{m}{nw_i^\circ}}$, the following holds with probability at least $1-\delta$.

For any $\boldsymbol{x}_0 \in \field^n$, with $\varepsilon, \hat{\boldsymbol{x}}, \boldsymbol{x}^{\perp}$ as in~\Cref{loc:setup_bernoulli_signal_recovery.statement}, we have that
\begin{align*}
\lVert \hat{\boldsymbol{x}}- \boldsymbol{x}_0\rVert_2 \leq 9\frac{ \sigma}{\sqrt{ m }} L(\boldsymbol{\alpha}, m) \sqrt{ \min\left(  \frac{5}{4\delta} + \frac{m}{n L^2(\boldsymbol{\alpha}, m)}, \frac{1}{n \min(\boldsymbol{\alpha})^2} \right) } \left( \sqrt{ \ell } + \sqrt{\log M} + \sqrt{ \log \frac{20}{\delta}} \right)&\\
+\lVert \boldsymbol{x}^\perp\rVert + 6\lVert SDF\boldsymbol{x}^\perp\rVert_{2}  + \frac{3}{2}\sqrt{ \varepsilon }.&
\end{align*}
\end{theorem}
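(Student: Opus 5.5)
The plan is to obtain the theorem by specializing the general Bernoulli recovery guarantee, \Cref{loc:cs_with_bernoulli_and_denoising_on_unions_of_subspaces.statement}, to the weights $\boldsymbol{w}^\circ$. That result gives, for arbitrary admissible weights $\boldsymbol{w}$ and preconditioner $d_i=\sqrt{m/(nw_i)}$, an error bound of the following form, valid on an event of probability $1-\delta$ on which $SDF$ satisfies the RIP on $\mathcal{T}$ with constant $1/2$ (say). The bound is a constant times $\|\tilde D\,T(SD\boldsymbol{\alpha})\|_2\,\sigma/\sqrt m\,(\sqrt\ell+\sqrt{\log M}+\sqrt{\log(20/\delta)})$, plus $\|\boldsymbol{x}^\perp\|+6\|SDF\boldsymbol{x}^\perp\|_2+\tfrac32\sqrt\varepsilon$. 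That RIP event is guaranteed by \cref{loc:rip_of_bernoulli_selector_non:uniform_sampling_matrix_on_a_union_of_subspaces.statement} once $m\gtrsim\gamma(\boldsymbol{\alpha},\boldsymbol{w})(\log\ell+\log M+\log(20/\delta))$.

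The proof then has two remaining tasks: replace the Bernoulli complexity by $L^2$, and bound the noise factor.

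\textbf{Replacing the complexity by $L^2$.} First, I will bound $\gamma(\boldsymbol{\alpha},\boldsymbol{w})\le\eta(\boldsymbol{\alpha},\boldsymbol{w})$ using the simplified upper bound from \Cref{loc:body.optimized_sampling}. Then I will invoke the result there that $\boldsymbol{w}^\circ$ minimizes $\eta$ with value $L^2(\boldsymbol{\alpha},m)$. Heuristically, the saturated rows contribute nothing to the deviation of $\sum_i(\xi_i/w_i)\boldsymbol{f}_i\boldsymbol{f}_i^*$, since their $\xi_i\equiv1$. The unsaturated rows satisfy $\alpha_i^2/w_i=L^2/m$, so the relevant quantity $\max_i\alpha_i^2 n d_i^2/m$ over unsaturated $i$ equals $L^2/m$ exactly. \Cref{loc:norm_of_the_optimized_probability_weights.statement} ensures $\boldsymbol{w}^\circ\in[0,1]^n\cap m\Delta^{n-1}$, so the sampling matrix is well defined. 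This turns \eqref{eq:samp:comp} into the hypothesis needed for the RIP.

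\textbf{Bounding the noise factor.} I will apply the general bounds of \Cref{loc:simple_bound_on_the_gaussian_noise_error_factor.statement} to $\|\tilde D\,T(SD\boldsymbol{\alpha})\|_2^2$, where $\tilde D=\sqrt{m/n}\,\Diag(S\boldsymbol d)$. The entries of this vector are roughly $\xi_i\,(m/(nw_i))\,\alpha_i$ (up to truncation), and I will bound it in two ways and take the minimum.

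\emph{(a) Markov bound.} The expectation is $\sum_i w_i\,(m/(nw_i))^2\alpha_i^2\cdot(n/m)\cdot(\text{normalization})$. On unsaturated indices, $\alpha_i^2/w_i=L^2/m$, so they contribute $(L^2/m)(m/n)J\le L^2\cdot\tfrac1n\cdot(\ldots)$. On saturated indices, $w_i=1$, and these contribute at most $(m/n)\|\boldsymbol{\alpha}\|_2^2/\ldots$. Normalizing by $L^2/m$ should yield the two terms $1+\frac{m}{nL^2}$. Markov's inequality at level $\delta/5$ (absorbing $\delta$ through a union bound with the RIP event, hence the $20/\delta$ and $5/(4\delta)$) then gives the first branch.

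\emph{(b) Deterministic bound.} The truncation operator $T$ caps the entries. Combined with $w_i\ge m\alpha_i^2/L^2\ge m\min(\boldsymbol\alpha)^2/L^2$, this gives $d_i^2\le L^2/(n\min(\boldsymbol\alpha)^2)$, and hence the branch $\frac{1}{n\min(\boldsymbol\alpha)^2}$ after the truncation has restricted the count of summands.

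Finally, I will collect constants, using the RIP constant to obtain the factor $9$.

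I expect the main obstacle to be the noise factor. Getting exactly the $\min(\cdot,\cdot)$ with the stated constants requires care with how saturated rows interact with the truncation operator and with the Markov/union-bound bookkeeping in probability. I would first try to prove the expectation computation cleanly by splitting the indices into $[J]$ and $[n]\setminus[J]$, using part 1 of \Cref{loc:norm_of_the_optimized_probability_weights.statement}.
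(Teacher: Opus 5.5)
Your skeleton matches the paper's. The RIP comes from $\gamma(\boldsymbol{\alpha},\boldsymbol{w}^\circ)\le\eta(\boldsymbol{\alpha},\boldsymbol{w}^\circ)=L(\boldsymbol{\alpha},m)$; the minimal value of $\eta$ is $L$, not $L^2$, and the hypothesis needs $\gamma^2\le L^2$. The error bound comes from the RIP recovery lemma, and the noise factor is bounded separately and merged by a union bound.

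\textbf{The gap is in branch (a) of the noise factor.} Let $U=\{i: w_i^\circ<1\}$, so $|U|=J$. Applying Markov to the untruncated quantity over all indices, as you propose, starts from
\[
\mathbb{E}\,\|SD^2\boldsymbol{\alpha}\|_2^2=\frac{m}{n}\sum_{i=1}^n\frac{\alpha_i^2}{w_i^\circ}=\frac{L^2 J}{n}+\frac{m}{n}\sum_{i\notin U}\alpha_i^2 ,
\]
and Markov then multiplies both terms by $1/t$. The saturated term is not $m/n$. Each $\alpha_i^2$ with $i\notin U$ is at least $L^2/m$, and their sum can be of order $\ell$, e.g.\ when $\ell$ rows of $F$ lie nearly inside one subspace of $\mathcal{T}$. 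No normalization turns this into $L^2\bigl(\frac1t+\frac{m}{nL^2}\bigr)$, whose second term carries no factor $1/\delta$.

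\textbf{Missing idea: treat the saturated rows deterministically through the truncation.}
\begin{itemize}
\item For $i\notin U$ one has $d_i=\sqrt{m/n}$, so $\widetilde D$ acts there as $\sqrt{m/n}$ times the identity. Since the squared entries of $T(SD\boldsymbol{\alpha})$ sum to at most $1$, this block contributes at most $m/n$ with probability one.
\item Only the unsaturated block is random. There $T(SD\boldsymbol{\alpha})$ is dominated entrywise by $SD\boldsymbol{\alpha}$, and $d_i^2=L^2/(n\alpha_i^2)$. So the block is at most $\frac{L^4}{nm}\sum_{i\in U}\xi_i\alpha_i^{-2}$, whose mean is $L^2J/n\le L^2$, and Markov gives $L^2/t$.
\end{itemize}
This is precisely \Cref{loc:tail_on_the_noise_sensitivity_for_adapted_bernoulli_sampling.statement}, which precedes the theorem; you can simply cite it.

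\textbf{Your probability bookkeeping does not produce the stated constants.} The paper gives $\delta/10$ each to the RIP event and the Gaussian-noise event. Equivalently, apply \Cref{loc:cs_with_bernoulli_and_denoising_on_unions_of_subspaces.statement} at level $\delta/5$, not $\delta$; this is what yields $\log\frac{20}{\delta}$. The remaining $4\delta/5$ goes to Markov, i.e.\ $t=\frac{4\delta}{5}$ and $1/t=\frac{5}{4\delta}$. Markov at level $\delta/5$, as you wrote, would give $5/\delta$.

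\textbf{A smaller slip in branch (b).} The inequality $w_i^\circ\ge m\alpha_i^2/L^2$ holds only on unsaturated indices. On saturated ones $d_i^2=m/n$, and you need $m\min(\boldsymbol{\alpha})^2<L^2$ to conclude $\max_i d_i^2\le L^2/(n\min(\boldsymbol{\alpha})^2)$. This holds because the index of smallest coherence is unsaturated. The bound then follows from the convex-combination estimate in \Cref{loc:simple_bound_on_the_gaussian_noise_error_factor.statement}, not from any restriction on the number of summands.
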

The proof of \Cref{loc:optimized_bernoulli_cs_on_union_of_subspaces.statement} will follow from results in the following sections.
\begin{remark}
\label{loc:optimized_bernoulli_cs_on_union_of_subspaces.remark_simplified_result}
\Cref{loc:optimized_bernoulli_cs_on_union_of_subspaces.statement} implies the following. Let $m  \gtrsim L^2(\boldsymbol{\alpha}, m) \left( \log \ell + \log M + 1\right)$, $\boldsymbol{x}_0 \in \mathcal{Q}$. 
We also require, with $\boldsymbol{\alpha}$ re-indexed so that it
is increasing, that $\lVert \boldsymbol{\alpha}|_{\le  n-m+1}\rVert_2^2 \gtrsim \frac{m}{n}$, which is a mild technical condition on the local coherences (see~\Cref{loc:tail_on_the_noise_sensitivity_for_adapted_bernoulli_sampling.remark_bernoulli_paper}).
Then with probability at least $0.99$, any exact minimizer $\hat{\boldsymbol{x}}$ of \Cref{eq:opt:reconstruct} is such that 
\begin{equation*}
\|\hat{\boldsymbol{x}}-\boldsymbol{x}_0\|_2 \lesssim \frac{ \sigma}{\sqrt{ m }} L(\boldsymbol{\alpha}, m)\left( \sqrt{ \ell } + \sqrt{\log M} + 1 \right).
\end{equation*}
\end{remark}
The sample complexity in \Cref{loc:optimized_bernoulli_cs_on_union_of_subspaces.statement} is unconventional in its dependence of the function $L$ on $m$, on the r.h.s. of \Cref{eq:samp:comp}. We leave it in this form because we do not see a way 
to isolate the explicit bound on $m$. Nonetheless, our bound provides meaningful
intuition and improves over the analogous bound for with-replacement sampling.
\begin{proposition}[Upper bound Bernoulli L with local coherences]
\label{loc:upper_bound_bernoulli_l_with_local_coherences.statement}
Given an increasing vector $\boldsymbol{\alpha} \in \mathbb{R}_{++}^n$ and $m  \in \mathbb{N}$, let $L(\boldsymbol{\alpha}, m)$ be as in~\Cref{loc:optimized_bernoulli_weights.statement} (equivalently,~\Cref{loc:optimize_simple_bernoulli_selector_sampling.statement}). Then
\begin{equation*}
\lVert \boldsymbol{\alpha}|_{\leq n-m+1}\rVert_2 \le  L(\boldsymbol{\alpha}, m) \le  \|\boldsymbol{\alpha}\|_2.
\end{equation*}
\end{proposition}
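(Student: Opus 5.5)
Both inequalities follow directly from the closed form $L^2(\boldsymbol{\alpha},m)=R^2(J;\boldsymbol{\alpha},m)=\frac{m\,\|\boldsymbol{\alpha}|_{[J]}\|_2^2}{J-(n-m)}$ together with \Cref{loc:norm_of_the_optimized_probability_weights.statement}. No optimization argument is needed.

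\textbf{Preliminary step: $J\ge n-m+1$, so $L^2>0$.} By \Cref{loc:norm_of_the_optimized_probability_weights.statement}, $J$ is the number of unsaturated indices and $\sum_i w_i^\circ=m$. The $n-J$ saturated indices each contribute $1$. The $J$ unsaturated ones contribute $\sum_{i\le J} m\alpha_i^2/L^2(\boldsymbol{\alpha},m)$, which is strictly positive because $\boldsymbol{\alpha}\in\mathbb{R}^n_{++}$. Hence $n-J<m$, i.e.\ $J\ge n-m+1$. Then the denominator $J-(n-m)$ lies in $[1,m]$ and $L^2>0$.

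I expect this step to be the only delicate point. One has to rule out the degenerate reading of \eqref{eq:j} in which $R^2(j)$ is negative for $j\le n-m$ and the ratio condition holds vacuously. If relying on the proposition feels circular, I would instead check directly that the maximizing $j$ in \eqref{eq:j} satisfies $j>n-m$. At $j=n-m+1$ we have $R^2=m\|\boldsymbol{\alpha}|_{[n-m+1]}\|_2^2$, so the condition reads $\alpha_{n-m+1}^2<\|\boldsymbol{\alpha}|_{[n-m+1]}\|_2^2$. This holds whenever $n-m+1\ge 2$, which is true since $m<n$. So $J\ge n-m+1$ in any case.

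\textbf{Lower bound.} Set $k:=J-(n-m)\in[1,m]$. Then $\frac{m}{k}\ge 1$. Since $J\ge n-m+1$ and all entries of $\boldsymbol{\alpha}$ are nonnegative, $\|\boldsymbol{\alpha}|_{[J]}\|_2^2\ge \|\boldsymbol{\alpha}|_{\le n-m+1}\|_2^2$. Multiplying these two facts gives
\[
L^2(\boldsymbol{\alpha},m)=\frac{m}{k}\,\|\boldsymbol{\alpha}|_{[J]}\|_2^2\ \ge\ \|\boldsymbol{\alpha}|_{\le n-m+1}\|_2^2 .
\]

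\textbf{Upper bound.} Use the budget identity $\sum_i w_i^\circ=m$ together with the pointwise bound $w_i^\circ=\min\!\left(\frac{m\alpha_i^2}{L^2},1\right)\le \frac{m\alpha_i^2}{L^2}$. Summing over $i$ gives
\[
m=\sum_{i=1}^n w_i^\circ\le \frac{m\,\|\boldsymbol{\alpha}\|_2^2}{L^2(\boldsymbol{\alpha},m)}.
\]
Since $L^2>0$ and $m>0$, rearranging yields $L^2(\boldsymbol{\alpha},m)\le\|\boldsymbol{\alpha}\|_2^2$. Taking square roots of both bounds completes the argument.
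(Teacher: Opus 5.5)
Your proof is correct and follows essentially the same route as the paper. The paper gets the lower bound by checking the defining condition of $J$ at $j=n-m+1$ to obtain $J\ge n-m+1$ and then using $m/(J-(n-m))\ge 1$. It gets the upper bound from the budget identity $\sum_i \min(m\alpha_i^2/L^2,1)=m$ compared against $\sum_i m\alpha_i^2/L^2$, which it phrases as comparing the roots of two decreasing functions. Of your two arguments for $J\ge n-m+1$, keep the direct check rather than the appeal to the norm proposition: the latter tacitly assumes $L^2>0$, which is exactly the claim $J>n-m$.
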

The quantity $\|\boldsymbol{\alpha}\|_2$ is found in place of
$L(\boldsymbol{\alpha}, m)$ in~\cite[Theorem 2.1]{planDenoisingGuaranteesOptimized2025}
for an analogous optimized with-replacement sampling scheme.
Our sample complexity bound for Bernoulli sampling therefore improves
on the with-replacement analogue.
\begin{proposition}[Monotonicity of L in m]
\label{loc:monotonicity_of_l_in_m.statement}
The function $L(\boldsymbol{\alpha},  m)$ is decreasing in $m$.
\end{proposition}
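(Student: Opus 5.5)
The plan is to replace the combinatorial definition of $L$ (through the index $J$ in \eqref{eq:j}) with an implicit characterization as the root of a monotone equation, and then compare the roots for two values of $m$. Fix an increasing $\boldsymbol{\alpha}\in\mathbb{R}^n_{++}$ and $m<n$. By \Cref{loc:norm_of_the_optimized_probability_weights.statement}(2), $\tau = L^2(\boldsymbol{\alpha},m)$ satisfies
\begin{equation*}
\sum_{i=1}^n \min\left(\frac{m\alpha_i^2}{\tau},1\right) = m,
\qquad\text{equivalently}\qquad
h(\tau,m) := \sum_{i=1}^n \min\left(\frac{\alpha_i^2}{\tau},\frac{1}{m}\right) = 1 .
\end{equation*}
I would first show that $\tau=L^2(\boldsymbol{\alpha},m)$ is the \emph{unique} positive root of $h(\cdot,m)=1$.

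\textbf{Uniqueness of the root.} For fixed $m$, the map $\tau\mapsto h(\tau,m)$ is continuous and nonincreasing on $(0,\infty)$. It is strictly decreasing on any interval where at least one term is unsaturated, i.e.\ where $\alpha_i^2/\tau<1/m$ for some $i$. At any root with $m<n$, not all terms can be saturated, since otherwise $h=n/m>1$. Suppose there were two roots $\tau_1<\tau_2$. Then $h$ would be constant on $[\tau_1,\tau_2]$, so every term would have to be constant there. But a term that is unsaturated at $\tau_1$ stays unsaturated on the whole interval and is strictly decreasing in $\tau$, a contradiction. Hence the root is unique and equals $L^2(\boldsymbol{\alpha},m)$. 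One detail to check here is that \Cref{loc:norm_of_the_optimized_probability_weights.statement}(1) guarantees $J\ge 1$, so that at least one weight is genuinely unsaturated; this is exactly the strictness used above.

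\textbf{Comparison across $m$.} Let $m<m'<n$. Every term $\min(\alpha_i^2/\tau,1/m)$ is nonincreasing in $m$, so
\begin{equation*}
h\bigl(L^2(\boldsymbol{\alpha},m),m'\bigr)\le h\bigl(L^2(\boldsymbol{\alpha},m),m\bigr)=1 = h\bigl(L^2(\boldsymbol{\alpha},m'),m'\bigr).
\end{equation*}
Since $h(\cdot,m')$ is nonincreasing, the unique root $L^2(\boldsymbol{\alpha},m')$ cannot lie strictly above $L^2(\boldsymbol{\alpha},m)$. Indeed, if it did, then on the interval between the two values $h(\cdot,m')$ would be both $\le 1$ at the left end and $=1$ at the right end, hence constant and equal to $1$ there. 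That would produce a second root, contradicting uniqueness. Therefore $L(\boldsymbol{\alpha},m')\le L(\boldsymbol{\alpha},m)$, which is the claimed monotonicity.

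The main obstacle is the uniqueness step, because the definition of $L$ goes through the discrete index $J$ and \Cref{loc:norm_of_the_optimized_probability_weights.statement} only tells us that $L^2$ is \emph{a} root. The strict-decrease argument above handles this, provided $J\ge1$ is available from part (1) of that proposition. If one needs $m'$ to be allowed to reach $n$, the boundary case should be handled separately, using that $R^2$ stays well defined as long as $j>n-m$.
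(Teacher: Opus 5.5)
Your proof is correct, and it follows a different route from the paper's. The paper starts from the same identity $\sum_i \min\bigl(m\alpha_i^2/L^2(\boldsymbol{\alpha},m),1\bigr)=m$, but then treats $m$ as a continuous variable and differentiates implicitly with $J$ held fixed. This gives $\operatorname{sgn}(L')=\operatorname{sgn}\bigl(1-L^2/\|\boldsymbol{\alpha}|_{\le J}\|_2^2\bigr)$, which is nonpositive because $L^2=m\|\boldsymbol{\alpha}|_{\le J}\|_2^2/(J-(n-m))\ge\|\boldsymbol{\alpha}|_{\le J}\|_2^2$ whenever $J\le n$. That calculation yields an explicit rate, $L'=\frac{L(J-n)}{2m(J-n+m)}$, which shows that $L$ is locally flat exactly when no weight is saturated. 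However, it tacitly extends $L$ to real $m$. It also needs differentiability between the breakpoints where $J$ jumps, and continuity of $L$ across those breakpoints to glue the pieces together; the paper checks neither.

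Your comparison argument avoids all of this. It compares two values $m<m'$ directly and uses only two facts: every summand of $h$ is nonincreasing in both $\tau$ and $m$, and the root is unique. You prove uniqueness carefully, via strict decrease of an unsaturated term at a root. Your appeal to $J\ge 1$ is redundant, since $h=n/m>1$ when every term saturates already supplies an unsaturated term. Your framework also recovers strictness where the paper's derivative formula shows it: suppose $L^2(\boldsymbol{\alpha},m)=L^2(\boldsymbol{\alpha},m')=\tau$ and some term is saturated at $(\tau,m)$. That term drops from $1/m$ to $1/m'$, which forces $h(\tau,m')<1$, a contradiction.
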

We defer \hyperlink{loc:upper_bound_bernoulli_l_with_local_coherences.proof}{the proof} of \Cref{loc:upper_bound_bernoulli_l_with_local_coherences.statement} and \hyperlink{loc:monotonicity_of_l_in_m.proof}{the proof} of \Cref{loc:monotonicity_of_l_in_m.statement} to \Cref{loc:body.proofs.properties_of_the_optimized_sampling_scheme}.

\Cref{loc:monotonicity_of_l_in_m.statement} suggests that the improvement in the sample complexity bound becomes significant for large
values of $m$. This makes sense intuitively because the quantity $L^2(\boldsymbol{\alpha}, m)$ is the sum of squared local coherences, but
limited to unsaturated entries only (which have smaller local coherences), and then suitably re-normalized. As $m$ grows, more entries
become saturated, and the associated local coherences (which are the larger local coherences) are excluded from the computation of $L^2(\boldsymbol{\alpha}, m)$.
So when $m$ becomes large enough to ``eat up" large local coherences,
the implicit sample complexity bound on $m$ in \Cref{loc:optimized_bernoulli_cs_on_union_of_subspaces.statement} will be easily be satisfied,
because the value of $L^2(\boldsymbol{\alpha}, m)$
will drop. We compute numerically the extent of this drop in \Cref{fig:numerical_complexity_bounds.pdf}.

To compute the explicit bound, recall that our sample complexity bound is of the form
$m \ge L^2(\boldsymbol{\alpha}, m) \Lambda$ for some quantity $\Lambda \in \mathbb{R}_+$. In \Cref{loc:optimized_bernoulli_cs_on_union_of_subspaces.statement},
we have this type of bound with $\Lambda = \left( \log \ell + \log M + \log \frac{20}{\delta}\right)$. 
We derive an associated explicit bound on $m$: with $\Phi(m) := \frac{m}{L^2(\boldsymbol{\alpha}, m)}$,
we have $m \ge \Phi^{-1}(\Lambda)$.
The inverse is well-defined because $L(\boldsymbol{\alpha}, m)$ is
decreasing in $m$ by \Cref{loc:monotonicity_of_l_in_m.statement}, and therefore
$\Phi:\mathbb{R} \to \mathbb{R}$
is a strictly increasing function, making $\Phi$
invertible when restricting its codomain to match its range. We compute
the values of the explicit sample complexity bound for a realistic coherence vector in
\Cref{fig:numerical_complexity_bounds.pdf},
comparing it with the analogous
bound in the with-replacement setting.

\begin{remark}
\label{loc:tail_on_the_noise_sensitivity_for_adapted_bernoulli_sampling.remark_bernoulli_paper}
The term $\frac{1}{12}\frac{m}{n L^2}$ in \Cref{loc:optimized_bernoulli_cs_on_union_of_subspaces.statement} is usually small. Indeed, from \Cref{loc:upper_bound_bernoulli_l_with_local_coherences.statement} it follows that after re-ordering $\boldsymbol{\alpha}$ to be increasing, $\frac{m}{n L^2} \leq \frac{m}{n} \frac{1}{\lVert \boldsymbol{\alpha}|_{\le  n-m+1}\rVert_2^2}$. This term becomes significant only when
\begin{equation*}
\lVert \boldsymbol{\alpha}|_{\le  n-m+1}\rVert_2^2 \lesssim \frac{m}{n}.
\end{equation*}
All but the top $m$ local coherences would have to be on the order of $\frac{\sqrt{m}}{n}$ (assuming that $m \ll n$).
The expected coherence of a randomly rotated vector with a $k$-dimensional subspace is about $\sqrt{\frac{k}{n}}$, so the typical size of the truncated local coherence vector is on the order of $k$. Therefore, for the additional term to be significant, the local coherences would
have to be atypically small.
\end{remark}
\begin{figure}[!t]
\centering
\includegraphics[width=\textwidth,alt={Two log-scale line plots for a fixed local coherence vector from the flower-data discrete Fourier transform prior set. In the left panel, L squared of alpha and m decreases as the ratio m over n increases, while the squared ell two norm of alpha stays constant. In the right panel, the sample-complexity bound increases with capital Lambda for both Bernoulli and with-replacement sampling, and the Bernoulli bound stays lower throughout.}]{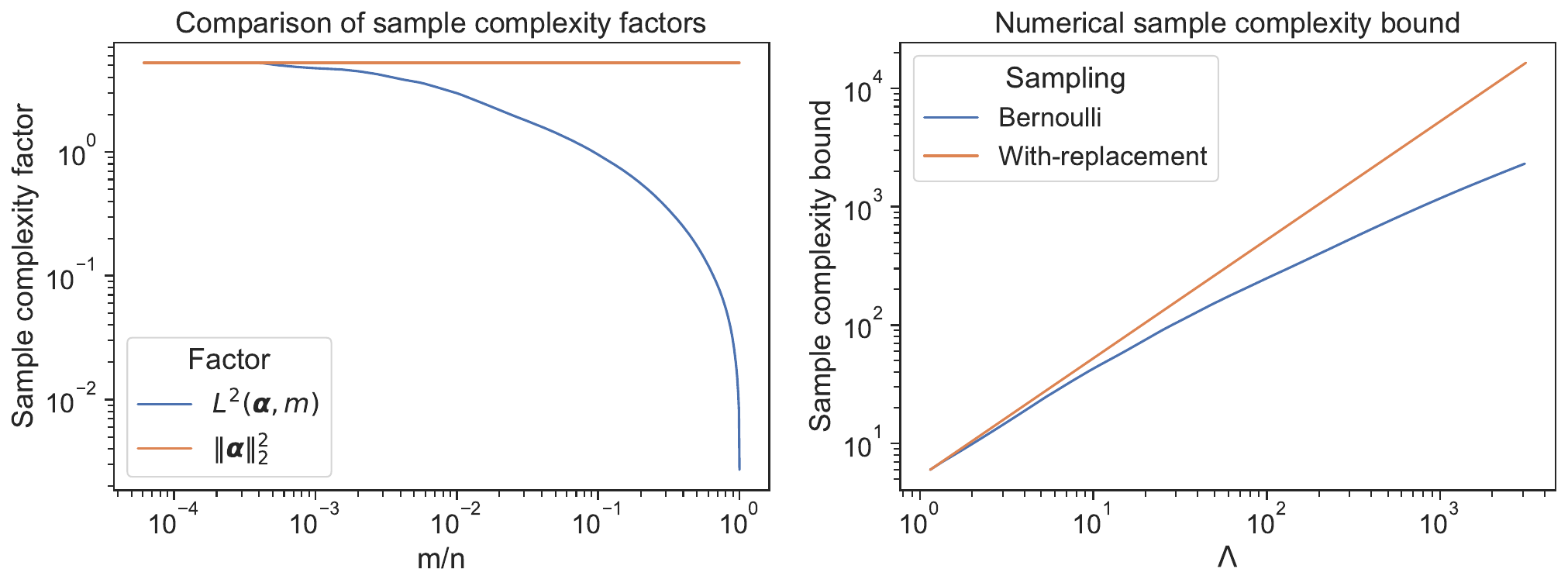}
\caption{
    We consider a fixed coherence vector $\boldsymbol{\alpha}$ across all experiments,
    which is the local coherences of the DFT matrix on the flower dataset~\cite{nilsbackAutomatedFlowerClassification2008} as prior set ($n = 16384$). 
    The right plot compares numerically the bound on $m$ induced by a bound of the form $m \ge L^2(\boldsymbol{\alpha}, m) \Lambda$ (see the above discussion).
}\label{fig:numerical_complexity_bounds.pdf}
\end{figure}

\section{General signal recovery}
\label{loc:body.general_signal_recovery}

In this section we develop a general recovery framework that applies to any sensing
matrix satisfying a suitable RIP condition. Working in the setting of \Cref{loc:setup_bernoulli_signal_recovery.statement}, we derive a
signal recovery guarantee under RIP assumptions on the preconditioned CS matrix $S D F$.
This abstract result forms the backbone of our analysis: Section~\ref{loc:body.variable:density_sampling}
will verify the required RIP condition specifically for Bernoulli selectors, and define
the complexity quantity $\gamma(\boldsymbol{\alpha},\boldsymbol{w})$ under which these bounds hold.

To control the noise term appearing in our recovery bound, we introduce a simple
truncation operator that limits the magnitude of a vector to unit scale by truncating its support to sufficiently many entries (using the coordinate order).

\begin{definition}[Unit truncation]
\label{loc:unit_truncation.statement}
Given some $\boldsymbol{v} \in \mathbb{R}^n$, let
\begin{equation*}
I =\min \left\{ \bar{I} \in  [n] \middle| \|v|_{[\bar{I}]}\|_2 \ge 1 \right\}.
\end{equation*}
Then define the unit truncation operator $\trunc: \measfield^n  \to  \measfield^n$ to have
entries
\begin{equation*}
\trunc(\boldsymbol{v})_i := \begin{cases}
v_i & i < I, \\
\sqrt{1-\|\boldsymbol{v}|_{[I-1]}\|_2^2} & i  = I,  \\
0 & i> I.
\end{cases}
\end{equation*}
\end{definition}
\begin{remark}
The truncation operator is used solely to bound the contribution of the noise term
in our analysis; it does not affect the structure of the recovery argument.
\end{remark}

Next we recall the general version of the celebrated \emph{restricted isometry property (RIP)}.
\begin{definition}[Restricted Isometry Property]
\label{loc:rip.statement}
Let $\mathcal{T} \subseteq \field^n$ be a cone and $A \in \measfield^{m \times n}$ a matrix. We say that $A$ satisfies the \emph{Restricted Isometry Property} (RIP) when
\begin{equation*}
\sup_{u \in \mathcal{T} \cap \mathcal{S}^{n-1}}\lvert \lVert Au\rVert_2 - 1\rvert \leq \frac{1}{3}.
\end{equation*}
\end{definition}

The following lemma provides a general signal recovery guarantee under an RIP
assumption. It was originally stated in  \cite[Theorem 3.3]{planDenoisingGuaranteesOptimized2025}, 
we restate it here for completeness.

\begin{lemma}[Signal recovery with the RIP;{\cite[Theorem 3.3]{planDenoisingGuaranteesOptimized2025}}]
\label{loc:signal_recovery_with_subsampled_unitary_matrix_with_gaussian_noise_on_union_of_subspaces.bernoulli} Under~\Cref{loc:setup_bernoulli_signal_recovery.statement}, let $S$ be some fixed sampling matrix and suppose that $SDF$ satisfies the \hyperref[loc:rip.statement]{RIP} on
$\mathcal{T}$. Then for $t>0$, the following holds with probability at least
$1-2\exp(-t^2)$. 

For any $\boldsymbol{x}_0 \in \field^n$, with
$\varepsilon, \hat{\boldsymbol{x}}, \boldsymbol{x}^{\perp}$ as
in~\Cref{loc:setup_bernoulli_signal_recovery.statement},
we have that
\begin{align*}
\lVert \hat{\boldsymbol{x}}- \boldsymbol{x}_0\rVert_2 \leq 9\frac{ \sigma}{\sqrt{ m }} \|\widetilde{D}\trunc(SD \boldsymbol{\alpha})\|_2 \left( \sqrt{ \ell } + \sqrt{\log M} + t\right)&\\
+\lVert \boldsymbol{x}^\perp\rVert + 6\lVert SDF\boldsymbol{x}^\perp\rVert_{2}  + \frac{3}{2}\sqrt{ \varepsilon }.
\end{align*}
\end{lemma}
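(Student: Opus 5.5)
The argument is the standard RIP-based recovery argument, with the noise handled by a Gaussian supremum over the union of subspaces.

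\textbf{Step 1 (basic inequality).} Write $A := SDF$, $\bar{\boldsymbol{x}} := \proj_{\mathcal{Q}}\boldsymbol{x}_0$, and $\boldsymbol{h} := \hat{\boldsymbol{x}}-\bar{\boldsymbol{x}} \in \mathcal{Q}-\mathcal{Q} \subseteq \mathcal{T}$. Then $\widetilde D\boldsymbol{b} = A\bar{\boldsymbol{x}} + A\boldsymbol{x}^\perp + \widetilde D\boldsymbol{\eta}$. Set $\boldsymbol{r} := A\boldsymbol{x}^\perp + \widetilde D\boldsymbol{\eta}$. Comparing the objective at $\hat{\boldsymbol{x}}$ with its value at the feasible point $\bar{\boldsymbol{x}}$ in \eqref{eq:opt:recov} gives
\begin{equation*}
\|A\boldsymbol{h}-\boldsymbol{r}\|_2^2 \le \|\boldsymbol{r}\|_2^2+\varepsilon .
\end{equation*}
Expanding, this becomes $\|A\boldsymbol{h}\|_2^2 \le 2\mathcal{R}\langle A\boldsymbol{h}, A\boldsymbol{x}^\perp\rangle + 2\mathcal{R}\langle A\boldsymbol{h},\widetilde D\boldsymbol{\eta}\rangle + \varepsilon$. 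By the RIP, $\|A\boldsymbol{h}\|_2 \ge \frac{2}{3}\|\boldsymbol{h}\|_2$. The first cross term is at most $2\|A\boldsymbol{h}\|_2\|A\boldsymbol{x}^\perp\|_2$. For the second, set
\begin{equation*}
Z := \sup_{\boldsymbol{u}\in\mathcal{T}\cap B_2}\mathcal{R}\langle \widetilde D SF\boldsymbol{u}, \boldsymbol{g}\rangle ,
\end{equation*}
so that it is at most $2\frac{\sigma}{\sqrt m}\|\boldsymbol{h}\|_2 Z$. Solving the resulting quadratic inequality in $\|\boldsymbol{h}\|_2$, with RIP constants $2/3$ and $4/3$, gives
\begin{equation*}
\|\boldsymbol{h}\|_2 \lesssim \frac{\sigma}{\sqrt m}Z + \|A\boldsymbol{x}^\perp\|_2 + \sqrt{\varepsilon}.
\end{equation*}
Finally $\|\hat{\boldsymbol{x}}-\boldsymbol{x}_0\|_2 \le \|\boldsymbol{h}\|_2 + \|\boldsymbol{x}^\perp\|_2$. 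Careful bookkeeping should produce the constants $9$, $6$ and $3/2$.

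\textbf{Step 2 (bounding $Z$ via truncation).} This is the main obstacle. Decompose $\mathcal{T}=\bigcup_{k\le M} V_k$ into subspaces with $\dim V_k\le \ell$. On each $V_k$, the map $\boldsymbol{u}\mapsto \mathcal{R}\langle \widetilde D SF\boldsymbol{u},\boldsymbol{g}\rangle$ is a Gaussian process. Its supremum over the unit ball is at most $\|P_k F^*S^*\widetilde D\boldsymbol{g}\|_2$, whose mean is at most $\sqrt{\ell}\,\sup_k\|\widetilde D SF|_{V_k}\|$ and which is Lipschitz in $\boldsymbol{g}$ with that same constant.

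The needed estimate is that the row-wise contributions are controlled by the truncated coherence vector. Each row of $\widetilde D SF$ is $\widetilde D_{jj}\boldsymbol{f}_{i_j}^*$, and its inner product with a unit $\boldsymbol{u}\in\mathcal{T}$ is at most $\alpha_{i_j}$ in modulus. Hence, entrywise, $|(SF\boldsymbol{u})_j| \le (SD\boldsymbol{\alpha})_j / d_{i_j}$ (up to the $\sqrt{n/m}$ scaling). Moreover $\|SF\boldsymbol{u}\|_2$ is comparable to $1$ by the RIP, after accounting for $D$. So the vector $SF\boldsymbol{u}$ is dominated entrywise by $SD\boldsymbol{\alpha}$, rescaled appropriately, and has bounded $\ell_2$ mass.

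The plan is then a rearrangement (greedy) argument. Maximizing $\sum_j \widetilde D_{jj}^2 |v_j|^2$ subject to $|v_j|\le c_j$ and $\|\boldsymbol{v}\|_2\le 1$ is achieved by filling mass in order, which is exactly what $\trunc$ encodes, provided the coordinates are ordered by decreasing $\widetilde D_{jj}$. I expect to use that ordering, and this is the delicate point that must be checked against the definition of $\trunc$. The conclusion is
\begin{equation*}
\|\widetilde D SF\boldsymbol{u}\|_2 \lesssim \|\widetilde D\,\trunc(SD\boldsymbol{\alpha})\|_2 \quad\text{uniformly over } \boldsymbol{u}\in\mathcal{T}\cap B_2 .
\end{equation*}

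\textbf{Step 3 (concentration and union bound).} Combining Step 2 with Gaussian concentration for Lipschitz functions on each $V_k$, a union bound over the $M$ subspaces gives, with probability at least $1-2e^{-t^2}$,
\begin{equation*}
Z \lesssim \|\widetilde D\,\trunc(SD\boldsymbol{\alpha})\|_2\left(\sqrt{\ell}+\sqrt{\log M}+t\right).
\end{equation*}
Substituting this into Step 1 yields the stated bound.
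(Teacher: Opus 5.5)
Your overall architecture is right: the basic inequality, RIP constants $2/3$ and $4/3$, a Gaussian supremum on each of the $M$ subspaces with Lipschitz concentration and a union bound, and a rearrangement argument that produces $\trunc$. The rearrangement needs the rows ordered so that $S\boldsymbol d$ is decreasing; the statement implicitly requires this, as you suspected. (The paper gives no proof here and imports the result from the cited Theorem 3.3.) Your quadratic in Step 1 even yields the constants $6$ and $3/2$ exactly.

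However, the Gaussian process is mis-identified. The noise enters the objective preconditioned, as $\widetilde D\boldsymbol\eta$, so the cross term is
\[
\mathcal{R}\langle A\boldsymbol h,\widetilde D\boldsymbol\eta\rangle=\frac{\sigma}{\sqrt m}\,\mathcal{R}\langle \widetilde D A\boldsymbol h,\boldsymbol g\rangle .
\]
Your $Z$ instead uses $\widetilde D SF\boldsymbol u$. Since $\widetilde D S=SD$, your $\widetilde D SF$ is just $A$, so your $Z$ is the supremum for the \emph{unpreconditioned} noise $\boldsymbol\eta$. You have dropped exactly the factor $\widetilde D$ that the noise factor $\|\widetilde D\trunc(SD\boldsymbol\alpha)\|_2$ is there to capture.

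As a result, Step 2 aims at a false inequality. The target $\|\widetilde D SF\boldsymbol u\|_2\lesssim \|\widetilde D\trunc(SD\boldsymbol\alpha)\|_2$ reads $\|A\boldsymbol u\|_2\lesssim\|\widetilde D\trunc(SD\boldsymbol\alpha)\|_2$. The left side is at least $2/3$ by the RIP, while the right side can be of order $\sqrt{m/n}$. For example, take $F=I$ and $\mathcal T=\mathrm{span}(\boldsymbol e_1,\dots,\boldsymbol e_\ell)$ with rows $1,\dots,\ell$ saturated. There $\widetilde D_{jj}=\sqrt{m/n}$ and $(SD\boldsymbol\alpha)_j=1$, so the right side is $\sqrt{m/n}$ while $\|A\boldsymbol u\|_2=1$.

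The supporting claims are also off. The entries of $SF\boldsymbol u$ are bounded by $S\boldsymbol\alpha$, not by a uniform rescaling of $SD\boldsymbol\alpha$. And the RIP controls $\|SDF\boldsymbol u\|_2$, not $\|SF\boldsymbol u\|_2$.

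The repair is as follows.
\begin{itemize}
\item Use $Z:=\sup_{\boldsymbol u\in\mathcal T\cap B_2}\mathcal{R}\langle\widetilde D A\boldsymbol u,\boldsymbol g\rangle$. Its Lipschitz constant on each $V_k$ is $\sup_{\boldsymbol u\in V_k\cap B_2}\|\widetilde D A\boldsymbol u\|_2$.
\item Run your rearrangement on $\boldsymbol v=A\boldsymbol u$. This vector satisfies $|v_j|\le (SD\boldsymbol\alpha)_j$ and $\|\boldsymbol v\|_2\le 4/3$.
\item With $S\boldsymbol d$ decreasing, the greedy fill then gives $\|\widetilde D A\boldsymbol u\|_2\le\frac43\|\widetilde D\trunc(SD\boldsymbol\alpha)\|_2$.
\end{itemize}
With that change, Steps 1 and 3 go through as you describe.
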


To apply the recovery bound of \Cref{loc:signal_recovery_with_subsampled_unitary_matrix_with_gaussian_noise_on_union_of_subspaces.bernoulli}, we require estimates on the noise factor
$\|\tilde D\,\trunc(SD\boldsymbol{\alpha})\|_2$. The next result gives general upper bounds for this term, which will later be specialized to the Bernoulli setting in Section~\ref{loc:body.variable:density_sampling}.

\begin{proposition}[Bounds on the noise error]
\label{loc:simple_bound_on_the_gaussian_noise_error_factor.statement}
With any $\boldsymbol{d} \in \mathbb{R}^n_{++}$, $\boldsymbol{\alpha} \in \mathbb{R}^n_{++}$, and $S$ a fixed $m  \times  n$ sampling matrix such that $S \boldsymbol{d}$ is decreasing, with $D  =  \Diag(\boldsymbol{d})$ and $\widetilde{D} = \Diag( \sqrt{\frac{m}{n}} S \boldsymbol{d})$ we have that
\begin{equation}
\label{eq:first:trunc}
\|\widetilde{D}\trunc(SD \boldsymbol{\alpha})\|_2 \leq  \max(S \boldsymbol{d}) \leq  \max(\boldsymbol{d}).
\end{equation}
Furthermore, with
\begin{equation*}
I = |\supp\trunc(SD \boldsymbol{\alpha})|
\end{equation*}
\begin{equation}
\label{eq:second:trunc}
\|\widetilde{D}\trunc(SD \boldsymbol{\alpha})\|_2  \le \|(SD^2 \boldsymbol{\alpha})|_{[I]}\|_2 \leq \|SD^2 \boldsymbol{\alpha}\|_2.
\end{equation}
\end{proposition}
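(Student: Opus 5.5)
The plan is to prove both inequalities by comparing $\trunc(SD\boldsymbol{\alpha})$ entrywise with $\boldsymbol{v} := SD\boldsymbol{\alpha}$, using only two properties of the unit truncation (\Cref{loc:unit_truncation.statement}).

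\emph{Two properties of $\trunc$.}
\begin{enumerate}
\item $\|\trunc(\boldsymbol{v})\|_2 \le 1$. When $\|\boldsymbol{v}\|_2\ge 1$ this holds with equality, since the first $I-1$ entries are kept and the $I$-th is chosen to complete the norm to $1$. When $\|\boldsymbol{v}\|_2<1$ the index $I$ is undefined; I will adopt the convention $\trunc(\boldsymbol{v})=\boldsymbol{v}$, $I=|\supp \boldsymbol{v}|$, which keeps both properties true.
\item $\trunc(\boldsymbol{v})$ is supported on $[I]$ and satisfies $|\trunc(\boldsymbol{v})_i|\le |v_i|$ for every $i$. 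This is trivial for $i\neq I$. For $i=I$, minimality of $I$ gives $\|\boldsymbol{v}|_{[I]}\|_2\ge 1$, hence $1-\|\boldsymbol{v}|_{[I-1]}\|_2^2\le |v_I|^2$.
\end{enumerate}

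\emph{First inequality \eqref{eq:first:trunc}.} Since $\widetilde{D}$ is diagonal,
\[
\|\widetilde{D}\trunc(\boldsymbol{v})\|_2 \le \|\widetilde{D}\|\,\|\trunc(\boldsymbol{v})\|_2 \le \max_i |\widetilde{D}_{ii}|.
\]
So it remains to bound $\max_i|\widetilde{D}_{ii}|$. Write each row of $S$ as $c\,\boldsymbol{e}_{j}^*$ for a selected index $j$, with the common scaling $c$ ($c=\sqrt{n/m}$ in \Cref{loc:bernoulli_selector_sampling_matrix.statement}). Then the diagonal of $\widetilde{D}=\sqrt{m/n}\,\Diag(S\boldsymbol{d})$ consists of the selected entries of $\boldsymbol{d}$, up to the factor $c\sqrt{m/n}\le 1$. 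This gives the bound $\max(S\boldsymbol{d})$, with $S\boldsymbol{d}$ understood in this normalized sense. Each entry of $S\boldsymbol{d}$ is one of the $d_j$, which gives $\max(S\boldsymbol{d})\le\max(\boldsymbol{d})$. The monotonicity hypothesis on $S\boldsymbol{d}$ is not needed here; it only fixes which coordinates the truncation keeps, namely those with the largest weights.

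\emph{Second inequality \eqref{eq:second:trunc}.} I would compute $\widetilde{D}\boldsymbol{v}$ coordinatewise. For the row of $S$ selecting $j$,
\[
(\widetilde{D}\boldsymbol{v})_i=\sqrt{m/n}\,(c\,d_j)(c\,d_j\alpha_j)=c^2\sqrt{m/n}\,d_j^2\alpha_j .
\]
With $c=\sqrt{n/m}$ this equals $c\,d_j^2\alpha_j=(SD^2\boldsymbol{\alpha})_i$, so $\widetilde{D}\boldsymbol{v}=SD^2\boldsymbol{\alpha}$. Combining this with property 2 gives, entrywise,
\[
|(\widetilde{D}\trunc(\boldsymbol{v}))_i|\le |(SD^2\boldsymbol{\alpha})_i|\,\mathbf 1_{i\le I}.
\]
Taking norms yields $\|\widetilde{D}\trunc(\boldsymbol{v})\|_2\le\|(SD^2\boldsymbol{\alpha})|_{[I]}\|_2$. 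The last inequality, $\|(SD^2\boldsymbol{\alpha})|_{[I]}\|_2\le\|SD^2\boldsymbol{\alpha}\|_2$, holds because restricting a vector's support cannot increase its norm.

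\emph{Main obstacle.} The mathematics is elementary; the only delicate points are bookkeeping. First, the scaling conventions linking $S$, $\widetilde{D}$ and $D$ must be tracked so that the identity $\widetilde{D}SD=SD^2$ and the comparison $\max(S\boldsymbol{d})\le\max(\boldsymbol{d})$ come out with the right constants. Second, the edge case $\|\boldsymbol{v}\|_2<1$, where $\trunc$ is formally undefined, must be handled, and I would address it explicitly through the convention in property 1.
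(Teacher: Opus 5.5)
Your proposal is correct and follows the paper's proof. The first bound is the paper's observation that the entries of $\trunc(SD\boldsymbol{\alpha})^{.2}$ are convex coefficients, which you phrase as $\|\widetilde{D}\trunc(\boldsymbol{v})\|_2\le\|\widetilde{D}\|\,\|\trunc(\boldsymbol{v})\|_2$. The second is the same entrywise domination of $\trunc(SD\boldsymbol{\alpha})$ by $(SD\boldsymbol{\alpha})|_{[I]}$, combined with the identity $\widetilde{D}SD\boldsymbol{\alpha}=SD^2\boldsymbol{\alpha}$ that the paper leaves implicit.

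Your tracking of the $\sqrt{n/m}$ scaling in $S$ also fills in details the paper glosses over. It shows that $\max(S\boldsymbol{d})\le\max(\boldsymbol{d})$ only holds for the normalized vector $\sqrt{m/n}\,S\boldsymbol{d}$, and can fail literally. Your handling of the case $\|SD\boldsymbol{\alpha}\|_2<1$, where $\trunc$ is formally undefined, is likewise a detail the paper does not address.
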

\begin{proof}[\hypertarget{loc:simple_bound_on_the_gaussian_noise_error_factor.proof}Proof of \Cref{loc:simple_bound_on_the_gaussian_noise_error_factor.statement}]

We write
\begin{equation*}
\|\widetilde{D}\trunc(SD \boldsymbol{\alpha})\|_2  = \sqrt{\sum_{i = 1}^I (S \boldsymbol{d})_i^2 \trunc(SD \boldsymbol{\alpha})_i^2}.
\end{equation*}
Under the square root, we find a convex combination of the entries of $S \boldsymbol{d}$ with convex coefficients $\trunc(SD \boldsymbol{\alpha})^{.2}$. The first bound in \Cref{eq:first:trunc} follows from bounding the convex combination by the size of the maximal element.

To see that \Cref{eq:second:trunc} holds, it suffices to check that $(SD \boldsymbol{\alpha})|_{[I]}$ dominates $\trunc(SD \boldsymbol{\alpha})$ entry-wise.
\end{proof}

The recovery lemma and noise bounds established in this section do hold for arbitrary 
sampling matrices. In the next section, we specialize these results to the
Bernoulli setting by introducing the complexity measure $\gamma(\boldsymbol{\alpha},\boldsymbol{w})$ and
establishing conditions under which a Bernoulli CS matrix has
the RIP (with high probability), conditions under which \Cref{loc:signal_recovery_with_subsampled_unitary_matrix_with_gaussian_noise_on_union_of_subspaces.bernoulli} then
applies.

\section{Variable-density sampling}
\label{loc:body.variable:density_sampling}
In this section we specialize the general recovery framework developed in
\Cref{loc:body.general_signal_recovery} to the Bernoulli setting. Given a Bernoulli selector matrix $S$ with
weights $\boldsymbol{w}$, we introduce a tight Bernoulli complexity measure $\gamma(\boldsymbol{\alpha},\boldsymbol{w})$
that captures the interaction between the sampling distribution and the local
coherence vector. This quantity is fundamental in establishing RIP conditions for
Bernoulli sampling, and serves as the bridge between the general recovery bound of
\Cref{loc:signal_recovery_with_subsampled_unitary_matrix_with_gaussian_noise_on_union_of_subspaces.bernoulli} and the optimized sampling scheme developed in \Cref{loc:body.optimized_sampling}.

\begin{definition}[Tight Bernoulli complexity]
\label{loc:tight_bernoulli_complexity.statement}
Let $\boldsymbol{\alpha} \in \mathbb{R}^n_{++}$ be a vector of local coherences and let $\boldsymbol{w} \in (0, 1]^n \cap m \Delta^{n-1}$ for $m < n$. We define
\begin{equation*}
\gamma(\boldsymbol{\alpha}, \boldsymbol{w}) := \max_{j \in [n]} \alpha_j \sqrt{ m } \max\left( \sqrt{ \frac{1-w_j}{w_j} }, 1 \right) \indicator_{\{w_j < 1\}}.
\end{equation*}
\end{definition}

The next result shows that the preconditioned sensing matrix $\tilde D S F$ (equivalently,
$SDF$) satisfies the RIP under conditions controlled by the Bernoulli complexity
$\gamma(\boldsymbol{\alpha},\boldsymbol{w})$. This provides the main technical link between the sampling distribution
$\boldsymbol{w}$ and the RIP-based recovery bound of \Cref{loc:signal_recovery_with_subsampled_unitary_matrix_with_gaussian_noise_on_union_of_subspaces.bernoulli}.

\begin{lemma}[RIP of CS matrix on a union of subspaces]
\label{loc:rip_of_bernoulli_selector_non:uniform_sampling_matrix_on_a_union_of_subspaces.statement}
Considering~\Cref{loc:setup_bernoulli_signal_recovery.statement}, let $S$ be a sampling matrix
with probability weights $\boldsymbol{w}  \in m \Delta^{n-1} \cap (0, 1]$. 
Let $D = \Diag(\boldsymbol{d})$ where $d_i = \sqrt{\frac{m}{n w_i}}$.
For $t>0$, and $\gamma$ as defined in~\Cref{loc:tight_bernoulli_complexity.statement}, suppose that
\begin{equation*}
m \gtrsim \gamma^2(\boldsymbol{\alpha}, \boldsymbol{w}) \left( \log \ell + \log M+ t^2\right).
\end{equation*}
Then for $S$ a sampling matrix with probability weights $\boldsymbol{w}$, $SDF$ has the \hyperref[loc:rip.statement]{RIP} on $\mathcal{T}$ with probability at least $1-2\exp(-t^2)$.
\end{lemma}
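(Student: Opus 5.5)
The plan is to reduce the RIP to a deviation bound for a sum of independent random self-adjoint matrices on each subspace of $\mathcal{T}$, apply the matrix Bernstein inequality, and finish with a union bound over the $M$ subspaces.

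\textbf{Reduction to one subspace.} Let $\mathcal{T}=\bigcup_{k=1}^M V_k$ with $\dim V_k\le\ell$, and let $U_k\in\measfield^{n\times\ell}$ have orthonormal columns spanning $V_k$. Since $\lvert\|Au\|_2-1\rvert\le\lvert\|Au\|_2^2-1\rvert$ for unit $u$, the RIP on $\mathcal{T}$ follows if
\begin{equation*}
\|U_k^*F^*D S^*S D F U_k - I\|\le \tfrac13 \quad\text{for every } k.
\end{equation*}
Because $S^*S=\frac{n}{m}\Diag(\boldsymbol{\xi})$ and $d_i^2=\frac{m}{nw_i}$, we have
\begin{equation*}
F^*DS^*SDF=\sum_i \frac{\xi_i}{w_i}\boldsymbol{f}_i\boldsymbol{f}_i^*,
\end{equation*}
whose expectation is $F^*F=I$. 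The deviation is therefore
\begin{equation*}
\sum_i X_i,\qquad X_i:=\Big(\frac{\xi_i}{w_i}-1\Big)U_k^*\boldsymbol{f}_i\boldsymbol{f}_i^*U_k,
\end{equation*}
a sum of independent, mean-zero, self-adjoint $\ell\times\ell$ matrices. Saturated indices ($w_i=1$) give $X_i=0$ deterministically, which is exactly why the indicator $\indicator_{\{w_j<1\}}$ appears in $\gamma$.

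\textbf{Bernstein parameters.} Two facts drive the estimates. First, $\|U_k^*\boldsymbol{f}_i\|_2\le\alpha_i$, since $\|U_k^*\boldsymbol{f}_i\|_2=\sup_{x\in V_k\cap B_2}\lvert\boldsymbol{f}_i^*x\rvert$ and $V_k\subseteq\mathcal{T}$. Second, $\sum_i U_k^*\boldsymbol{f}_i\boldsymbol{f}_i^*U_k=I_\ell$ by unitarity of $F$.
\begin{itemize}
\item \emph{Uniform bound.} For unsaturated $i$, $\|X_i\|\le\alpha_i^2\max\!\big(\frac{1-w_i}{w_i},1\big)\le\gamma^2/m$.
\item \emph{Variance.} $\mathbb{E}X_i^2=\frac{1-w_i}{w_i}\|U_k^*\boldsymbol{f}_i\|_2^2\,U_k^*\boldsymbol{f}_i\boldsymbol{f}_i^*U_k\preceq\frac{\gamma^2}{m}U_k^*\boldsymbol{f}_i\boldsymbol{f}_i^*U_k$, using $\frac{1-w_i}{w_i}\alpha_i^2\le\gamma^2/m$. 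Summing over $i$ gives $\|\sum_i\mathbb{E}X_i^2\|\le\gamma^2/m$.
\end{itemize}

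\textbf{Concentration and union bound.} Matrix Bernstein then gives, for each $k$,
\begin{equation*}
\mathbb{P}\Big(\big\|\textstyle\sum_i X_i\big\|>\tfrac13\Big)\le 2\ell\exp\!\big(-c\,m/\gamma^2\big).
\end{equation*}
A union bound over the $M$ subspaces bounds the total failure probability by $2\ell M\exp(-c\,m/\gamma^2)$. This is at most $2\exp(-t^2)$ once $m\ge C\gamma^2(\log\ell+\log M+t^2)$ with $C$ large enough to absorb the absolute constants, which is the assumed sample complexity.

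\textbf{Main obstacle.} The delicate step is the variance bound. The naive bound $\frac{1}{w_i}\alpha_i^2$ is too weak; one must use the sharper factor $(1-w_i)/w_i$. This factor makes saturated rows vanish, and the $\max(\cdot,1)$ in $\gamma$ covers the uniform-bound term when $w_i$ is close to $1$. The remaining point is to confirm that the real and complex cases are handled identically, which holds since Bernstein applies to Hermitian matrices.
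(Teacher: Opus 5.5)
Your proof is correct and follows the paper's route. Like the paper, you apply matrix Bernstein on each subspace with $K,\sigma^2\le\gamma^2/m$, where the factor $(1-w_i)/w_i$ kills the saturated rows, and then take a union bound over the $M$ subspaces. The differences are minor: you apply Bernstein directly at threshold $1/3$ rather than first proving the general single-subspace deviation bound, and you use the Hermitian form of Bernstein instead of passing to entrywise real parts, which is the cleaner choice when $\mathbb{K}=\mathbb{C}$ because the entrywise real part of a Hermitian matrix need not preserve its quadratic form on complex vectors.
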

The proof can be found in \Cref{loc:body.proofs.rip_of_preconditioned_subsampled_unitary_matrices}.
Combining \Cref{loc:rip_of_bernoulli_selector_non:uniform_sampling_matrix_on_a_union_of_subspaces.statement} with \Cref{loc:signal_recovery_with_subsampled_unitary_matrix_with_gaussian_noise_on_union_of_subspaces.bernoulli}
yields the following compressed sensing recovery and denoising guarantees for variable-density sampling.
\begin{theorem}[CS with Bernoulli and denoising on unions of subspaces]
\label{loc:cs_with_bernoulli_and_denoising_on_unions_of_subspaces.statement}
Under~\Cref{loc:setup_bernoulli_signal_recovery.statement},
with $\gamma$ the complexity function given by~\Cref{loc:simplified_bernoulli_complexity.statement}
and $\delta > 0$, suppose that
\begin{equation*}
m \gtrsim \gamma^2(\boldsymbol{\alpha}, \boldsymbol{w}) \left( \log \ell + \log M + \log \frac{4}{\delta} \right).
\end{equation*}
Sample the sampling matrix $S$ with probability weights $\boldsymbol{w}$. Let $D = \Diag(\boldsymbol{d})$ and $\widetilde{D} = \sqrt{\frac{m}{n}} \Diag(S \boldsymbol{d})$, where $d_i  =  \sqrt{\frac{m}{nw_i}}$.
Then, the following holds with probability at least $1-\delta$.

For any $\boldsymbol{x}_0  \in \field^n$, with $\hat{\boldsymbol{x}}, \boldsymbol{x}^{\perp}, \varepsilon$ as in~\Cref{loc:setup_bernoulli_signal_recovery.statement}, we have that
\begin{align*}
\lVert \hat{\boldsymbol{x}}- \boldsymbol{x}_0\rVert_2 \leq 9\frac{\sigma}{\sqrt{ m }}  \|\widetilde{D}\trunc(SD \boldsymbol{\alpha})\|_2  \left( \sqrt{ \ell } + \sqrt{\log M} + \sqrt{\log \frac{4}{\delta}}\right)&\\
+\lVert \boldsymbol{x}^\perp\rVert + 6\lVert SDF\boldsymbol{x}^\perp\rVert_{2}  + \frac{3}{2}\sqrt{ \varepsilon }.
\end{align*}
\end{theorem}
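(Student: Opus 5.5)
The argument combines \Cref{loc:rip_of_bernoulli_selector_non:uniform_sampling_matrix_on_a_union_of_subspaces.statement} with \Cref{loc:signal_recovery_with_subsampled_unitary_matrix_with_gaussian_noise_on_union_of_subspaces.bernoulli} through a union bound over two failure events. The first event is that the random Bernoulli matrix $SDF$ fails the RIP on $\mathcal{T}$. The second, conditional on a fixed $S$ with the RIP, is that the Gaussian noise bound of \Cref{loc:signal_recovery_with_subsampled_unitary_matrix_with_gaussian_noise_on_union_of_subspaces.bernoulli} fails. I will choose the same parameter $t$ for both lemmas, namely $t = \sqrt{\log(4/\delta)}$. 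Then each failure probability is $2\exp(-t^2) = \delta/2$, and the total failure probability is at most $\delta$.

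\textbf{Step 1 (RIP).} Apply \Cref{loc:rip_of_bernoulli_selector_non:uniform_sampling_matrix_on_a_union_of_subspaces.statement} with $t^2 = \log(4/\delta)$. The hypothesis $m \gtrsim \gamma^2(\boldsymbol{\alpha},\boldsymbol{w})(\log\ell+\log M+\log\frac4\delta)$ is exactly its assumption, with the same weights and the same $d_i=\sqrt{m/(nw_i)}$. Hence the event $E_1 = \{SDF \text{ has the RIP on } \mathcal{T}\}$ satisfies $\mathbb{P}(E_1^c)\le \delta/2$. The event $E_1$ depends only on the Bernoulli selectors $\xi_i$.

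\textbf{Step 2 (conditional recovery).} The noise $\boldsymbol{g}$ is independent of $S$. So I will condition on a realization of $S$ lying in $E_1$, under which $S$ is a fixed sampling matrix with the RIP. \Cref{loc:signal_recovery_with_subsampled_unitary_matrix_with_gaussian_noise_on_union_of_subspaces.bernoulli} with $t=\sqrt{\log(4/\delta)}$ then says the stated error bound holds simultaneously for all $\boldsymbol{x}_0$ with conditional probability at least $1-\delta/2$. Integrating over $S\in E_1$ gives
\[
\mathbb{P}(\text{bound fails}) \le \mathbb{P}(E_1^c) + \sup_{S\in E_1}\mathbb{P}(\text{fail}\mid S) \le \delta.
\]
The right-hand side of the resulting bound is exactly the claimed one, with $t$ replaced by $\sqrt{\log(4/\delta)}$.

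\textbf{Main obstacle.} The only delicate points are measurability and bookkeeping. The row count $\tilde m$ of $S$ is random, so the noise vector has dimension $\tilde m$, and I must check that \Cref{loc:setup_bernoulli_signal_recovery.statement} and \Cref{loc:signal_recovery_with_subsampled_unitary_matrix_with_gaussian_noise_on_union_of_subspaces.bernoulli} apply for every fixed realization. This holds because the lemma is stated for an arbitrary fixed sampling matrix. I must also confirm that the $\gamma$ referenced in the statement agrees with, or dominates, the tight complexity of \Cref{loc:tight_bernoulli_complexity.statement} used in the RIP lemma. If the referenced complexity is the simplified one, I would invoke its upper-bound relation from \Cref{loc:body.optimized_sampling}.
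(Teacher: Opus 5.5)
Your proposal is correct and matches the paper's proof: apply the RIP lemma and then the conditional recovery lemma, each with $t=\sqrt{\log(4/\delta)}$ so that each fails with probability at most $\delta/2$, and conclude with a union bound. Your two extra remarks are correct and go slightly beyond the paper's own write-up. The first is the explicit conditioning on $S$, using that $\boldsymbol{g}$ is independent of $S$. The second reconciles the statement's reference to the simplified complexity with the tight complexity used in the RIP lemma, via $\gamma(\boldsymbol{\alpha},\boldsymbol{w})\le\eta(\boldsymbol{\alpha},\boldsymbol{w})$.
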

See \hyperlink{loc:cs_with_bernoulli_and_denoising_on_unions_of_subspaces.proof}{the proof} in \Cref{loc:body.proofs.union_bounds}.

\Cref{loc:cs_with_bernoulli_and_denoising_on_unions_of_subspaces.statement} applies to any Bernoulli sampling distribution and expresses the recovery
guarantee explicitly in terms of the complexity $\gamma(\boldsymbol{\alpha},\boldsymbol{w})$. In the next
section, we introduce a simplified complexity measure $\eta(\boldsymbol{\alpha},\boldsymbol{w})$ and show that
the optimized weights $\boldsymbol{w}^\circ$ minimize this quantity, thereby identifying the
optimized complexity value $L(\boldsymbol{\alpha},m)$ appearing in the main theorem.

\section{Optimized sampling}
\label{loc:body.optimized_sampling}
Here, we define a simplified complexity $\eta(\boldsymbol{\alpha},\boldsymbol{w})$, optimize it in closed form, and use the resulting weights to control the noise-sensitivity term.
\begin{definition}[Simpler complexity function]
\label{loc:simplified_bernoulli_complexity.statement}
Let $\boldsymbol{\alpha} \in \mathbb{R}^n_{++}$ be a vector of local coherences and $\boldsymbol{w} \in (0, 1]^n \cap m \Delta^{n-1}$ for $m < n$. We define
\begin{equation*}
\eta(\boldsymbol{\alpha}, \boldsymbol{w}) := \max_{j \in [n]} \alpha_j \sqrt{ \frac{m}{w_j} } \indicator_{\{w_j < 1\}}.
\end{equation*}
\end{definition}
A direct comparison shows that
$\gamma(\boldsymbol{\alpha},\boldsymbol{w})\le \eta(\boldsymbol{\alpha},\boldsymbol{w})$
for all feasible weights $\boldsymbol{w}\in(0,1]^n\cap m\Delta_{n-1}$. In particular, any upper bound expressed in terms of $\eta(\boldsymbol{\alpha},\boldsymbol{w})$ can be used to control the RIP condition from Lemma 4.2 via $\gamma(\boldsymbol{\alpha},\boldsymbol{w})$. 
The vector of optimized probability weights is the minimizer of $\eta$ over the feasible Bernoulli weights. 
\begin{proposition}[Optimize simple Bernoulli selector sampling]
\label{loc:optimize_simple_bernoulli_selector_sampling.statement}
For a fixed vector $\boldsymbol{\alpha} \in \mathbb{R}_{++}$ and the function $\eta$ as in~\Cref{loc:simplified_bernoulli_complexity.statement}, let $\boldsymbol{w}^{\circ}$ be the optimized probability vector from~\Cref{loc:optimized_bernoulli_weights.statement}. Then
\begin{equation*}
\min_{\boldsymbol{w} \in (0,1]^n \cap m\Delta^{n-1}}\eta(\boldsymbol{\alpha}, \boldsymbol{w})  = \eta(\boldsymbol{\alpha}, \boldsymbol{w}^{\circ})  = L(\boldsymbol{\alpha}, m),
\end{equation*}
where $\boldsymbol{w}^{\circ}$ is the \emph{unique} minimizer.
\end{proposition}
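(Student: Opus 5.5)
Assume $\boldsymbol{\alpha}$ is increasing, which we may do since $\eta$ is permutation-invariant. The plan has three parts: show that $\boldsymbol{w}^\circ$ is feasible and attains $L$; show that every feasible $\boldsymbol{w}$ has $\eta(\boldsymbol{\alpha},\boldsymbol{w})\ge L(\boldsymbol{\alpha},m)$; and trace the equality case to get uniqueness.

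\textbf{Feasibility and value at $\boldsymbol{w}^\circ$.} By \Cref{loc:norm_of_the_optimized_probability_weights.statement}, $\boldsymbol{w}^\circ$ lies in $(0,1]^n\cap m\Delta^{n-1}$, since all $\alpha_j>0$. By the same proposition, exactly the indices $j\le J$ are unsaturated. For those indices $w^\circ_j = m\alpha_j^2/L^2$, so
\[
\alpha_j\sqrt{m/w^\circ_j}=L .
\]
Saturated indices contribute $0$ to the maximum, hence $\eta(\boldsymbol{\alpha},\boldsymbol{w}^\circ)=L$. Since $m<n$, at least one index is unsaturated, so this maximum is over a nonempty set.

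\textbf{Lower bound.} Take any feasible $\boldsymbol{w}$ and set $\eta:=\eta(\boldsymbol{\alpha},\boldsymbol{w})$. Every unsaturated index satisfies $w_j\ge m\alpha_j^2/\eta^2$, and trivially $w_j\le 1$. Let $U=\{j: w_j<1\}$ and $k=|U|$. Then
\[
m=\sum_i w_i \ge (n-k) + \frac{m}{\eta^2}\sum_{j\in U}\alpha_j^2 .
\]
This gives $k>n-m$ and
\[
\eta^2 \ge \frac{m\sum_{j\in U}\alpha_j^2}{k-(n-m)} \ge \frac{m\|\boldsymbol{\alpha}|_{[k]}\|_2^2}{k-(n-m)} = R^2(k),
\]
where the second inequality holds because $\boldsymbol{\alpha}$ is increasing, so the $k$ smallest entries give the smallest sum.

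It then suffices to show $\min_{k>n-m} R^2(k)=R^2(J)=L^2$. This is the step I expect to be the main obstacle. I would analyse the increments of $R^2$ through the identity
\[
R^2(k+1)-R^2(k) = \frac{m\alpha_{k+1}^2 - R^2(k)}{k+1-(n-m)} .
\]
This shows that $R^2(k+1)<R^2(k)$ exactly when $m\alpha_{k+1}^2<R^2(k)$. The same identity shows this condition is equivalent to $m\alpha_{k+1}^2<R^2(k+1)$, which is the defining condition in \eqref{eq:j} at index $k+1$.

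Because $\boldsymbol{\alpha}$ is increasing, once $m\alpha_{k+1}^2\ge R^2(k)$ holds we get $R^2(k+1)\le m\alpha_{k+1}^2\le m\alpha_{k+2}^2$, so the condition persists. Hence $R^2$ is unimodal: it decreases up to $J$ and is nondecreasing afterwards. The minimum is therefore at $J$, and the start of the range $k=n-m+1$ must be handled as the base case. I would need to check carefully that the set in \eqref{eq:j} is an initial segment of $\{n-m+1,\dots,n\}$, which is where the monotonicity argument does its work.

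\textbf{Uniqueness.} Suppose $\eta(\boldsymbol{\alpha},\boldsymbol{w})=L$. Then every inequality in the chain above is tight.
\begin{itemize}
\item Tightness of the first inequality forces $w_j=m\alpha_j^2/L^2$ for all $j\in U$.
\item Tightness of the second forces $U$ to consist of the $k$ smallest coherences, up to ties that change nothing.
\item We also need $R^2(k)=L^2$, so $k$ must be a minimiser of $R^2$. If the minimiser is not unique, ties occur at $k=J+1$ with $m\alpha_{J+1}^2=L^2$. In that case $w_{J+1}$ would equal $1$, contradicting $J+1\in U$.
\end{itemize}
Hence $U=[J]$ and $\boldsymbol{w}=\boldsymbol{w}^\circ$.
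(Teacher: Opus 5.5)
Your proof is correct, but it takes a different route from the paper.

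\textbf{The paper's route.} The paper proves a general exchange lemma. Suppose $f_1,\dots,f_n$ are decreasing on $(0,1]$, and $\boldsymbol{w}^*\in\beta\Delta^{n-1}$ satisfies $\lim_{w\uparrow w_i^*}f_i(w)\ge\max_j f_j(w_j^*)$ for every $i$. Then $\boldsymbol{w}^*$ minimizes $\max_j f_j(w_j)$, and it is the unique minimizer when the $f_i$ are strictly decreasing. The reason is that any other feasible $\hat{\boldsymbol{w}}$ must have some $\hat w_i<w_i^*$. The lemma is applied to $f_i(x)=\alpha_i\sqrt{m/x}$ for $x<1$ and $f_i(1)=0$. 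The check at $\boldsymbol{w}^\circ$ is then purely local:
\begin{itemize}
\item unsaturated coordinates give exactly $L$;
\item saturated coordinates have left limit $\alpha_i\sqrt{m}\ge L$, because $m\alpha_i^2/L^2\ge 1$.
\end{itemize}
Feasibility of $\boldsymbol{w}^\circ$ comes from \Cref{loc:norm_of_the_optimized_probability_weights.statement}.

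\textbf{Your route.} You sum the constraint to get the global bound $\eta^2\ge R^2(k)$, where $k$ is the number of unsaturated coordinates. You then show that $R^2$ is unimodal on $\{n-m+1,\dots,n\}$ with its minimum at $J$. The step you flagged is in fact settled by the pieces you already have:
\begin{itemize}
\item the increment identity;
\item its equivalence with the defining condition of $J$ at index $k+1$;
\item the persistence argument;
\item the base case $m\alpha_{n-m+1}^2<R^2(n-m+1)$, which holds since $n-m\ge 1$ and $\alpha_1>0$.
\end{itemize}
Together these show that the indices $j\ge n-m+1$ satisfying the condition form an initial segment ending at $J$.

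\textbf{Comparison.} The paper's argument is shorter and works for any strictly decreasing objectives. It also gets uniqueness for free, with no tie bookkeeping. Yours is self-contained and gives the extra identity $L^2(\boldsymbol{\alpha},m)=\min_{n-m<k\le n}R^2(k;\boldsymbol{\alpha},m)$, which explains why $J$ is defined as it is.

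\textbf{A simplification for uniqueness.} Ties of $R^2$ can extend beyond $J+1$ when several $m\alpha_j^2$ equal $L^2$, so your tie discussion as written is incomplete. You can avoid ties altogether:
\begin{itemize}
\item Equality forces $w_j=m\alpha_j^2/L^2<1$ for every $j\in U$.
\item Hence $U\subseteq\{j: m\alpha_j^2<L^2\}=[J]$, by \Cref{loc:norm_of_the_optimized_probability_weights.statement}.
\item So $k\le J$. Since $R^2$ is strictly decreasing up to $J$, $R^2(k)=L^2$ gives $k=J$, and therefore $U=[J]$.
\end{itemize}
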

We defer \hyperlink{loc:optimize_simple_bernoulli_selector_sampling.proof}{the proof} to \Cref{loc:body.proofs.optimizing_the_sampling_scheme}.

For the optimized sampling scheme, there is a simple probabilistic upper-bound
on the signal recovery noise error term.

\begin{lemma}[Tail bound for the noise sensitivity under optimized Bernoulli sampling]
\label{loc:tail_on_the_noise_sensitivity_for_adapted_bernoulli_sampling.statement}
Let $\boldsymbol{\boldsymbol{\alpha}} \in \mathbb{R}^n_{++}$ and $m \in \mathbb{N}$, and let $\boldsymbol{w}^\circ$, $J$, and $L(\boldsymbol{\alpha},m)$ be as in \Cref{loc:optimized_bernoulli_weights.statement}. Let $D  = \Diag(\boldsymbol{d})$ and $\widetilde{D}= \Diag\left( \sqrt{ \frac{m}{n} }S\boldsymbol{d} \right)$, where $d_i  = \sqrt{ \frac{m}{n w^{\circ}_i}}$. Let $S$ be the Bernoulli sampling matrix with probability weights $\boldsymbol{w}^{\circ}$, with rows re-ordered so that $S \boldsymbol{d}$ is decreasing. Then for any $t>0$,
\begin{equation*}
\|\widetilde{D}\trunc(SD \boldsymbol{\alpha})\|_2 \leq  L(\boldsymbol{\alpha}, m) \sqrt{ \min\left(  \frac{1}{t} + \frac{m}{n L^2(\boldsymbol{\alpha}, m)}, \frac{1}{n \min(\boldsymbol{\alpha})^2} \right) }
\end{equation*}
with probability at least $1-t$,  where $\mathbb{T}$ is the unit truncation operator from \Cref{loc:unit_truncation.statement}.
\end{lemma}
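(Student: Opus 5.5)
The plan is to bound the squared quantity $\|\widetilde{D}\trunc(SD\boldsymbol{\alpha})\|_2^2$ in two independent ways: one deterministic bound and one bound that holds with probability at least $1-t$. Their minimum then gives the claim after taking square roots. Throughout I use that the diagonal entries of $\widetilde{D}$ are $\sqrt{m/n}\,(S\boldsymbol d)_k$. For the row of $S$ selecting index $i$ this equals $\sqrt{m/n}\sqrt{n/m}\,d_i = d_i$, so the squared diagonal entry is $d_i^2 = \frac{m}{n w_i^\circ}$.

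\textbf{Deterministic bound.} I apply \eqref{eq:first:trunc} of \Cref{loc:simple_bound_on_the_gaussian_noise_error_factor.statement}. This is legitimate because the rows of $S$ have been reordered so that $S\boldsymbol d$ is decreasing. It gives $\|\widetilde{D}\trunc(SD\boldsymbol{\alpha})\|_2^2 \le \max_i d_i^2 = \frac{m}{n\min_i w_i^\circ}$.
\begin{itemize}
\item If the minimizing index is unsaturated, then $w_i^\circ = m\alpha_i^2/L^2$, and this bound equals $\frac{L^2}{n\alpha_i^2}\le \frac{L^2}{n\min(\boldsymbol\alpha)^2}$.
\item Not all indices can be saturated: by \Cref{loc:norm_of_the_optimized_probability_weights.statement} we have $\sum_i w_i^\circ = m < n$.
\end{itemize}
Hence the minimum over all weights is attained at an unsaturated index, which yields the second branch of the min.

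\textbf{Probabilistic bound.} I expand the squared norm exactly as in the proof of \Cref{loc:simple_bound_on_the_gaussian_noise_error_factor.statement}: $\sum_k d_{i(k)}^2\,\trunc(SD\boldsymbol\alpha)_k^2$, where the coefficients $\trunc(\cdot)_k^2$ are nonnegative and sum to at most $1$. I then split the sum over selected rows into saturated and unsaturated indices.
\begin{itemize}
\item \emph{Saturated rows.} Here $d_i^2 = m/n$. Since the coefficients sum to at most $1$, their total contribution is at most $m/n$.
\item \emph{Unsaturated rows.} I use the entrywise domination $\trunc(SD\boldsymbol\alpha)_k^2 \le (SD\boldsymbol\alpha)_k^2 = \frac{n}{m}d_i^2\alpha_i^2$, which was already used for \eqref{eq:second:trunc}. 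Each selected unsaturated index then contributes at most $\frac{n}{m}d_i^4\alpha_i^2 = \frac{m\alpha_i^2}{n (w_i^\circ)^2}$.
\end{itemize}
Call the random unsaturated sum $X := \sum_{i:\,w_i^\circ<1} \xi_i \frac{m\alpha_i^2}{n(w_i^\circ)^2}$. Then
\[
\mathbb E X = \sum_{i:\,w_i^\circ<1} \frac{m\alpha_i^2}{n w_i^\circ} = \frac{J L^2}{n} \le L^2,
\]
using $m\alpha_i^2/w_i^\circ = L^2$ on unsaturated indices, the fact from \Cref{loc:norm_of_the_optimized_probability_weights.statement} that there are exactly $J$ of them, and $J\le n$. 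Markov's inequality gives $X\le L^2/t$ with probability at least $1-t$. On that event,
\[
\|\widetilde{D}\trunc(SD\boldsymbol\alpha)\|_2^2 \le \frac{L^2}{t} + \frac{m}{n} = L^2\Bigl(\frac1t + \frac{m}{nL^2}\Bigr).
\]

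Since the deterministic bound holds always, on the same event the squared norm is at most $L^2$ times the minimum of the two branches; taking square roots finishes the proof. The main obstacle is the saturated/unsaturated split. Naively applying \eqref{eq:second:trunc} to all rows gives $\mathbb E\|SD^2\boldsymbol\alpha\|_2^2$, whose saturated part is $\sum_{\text{sat}} m\alpha_i^2/n$. That part is not controlled by $m/n$. The fix is to keep the convex-combination form, with coefficients summing to at most $1$, for the saturated rows, and to use the domination by $SD\boldsymbol\alpha$ only on the unsaturated ones.
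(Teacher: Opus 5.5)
Your proposal is correct and follows the paper's proof. The deterministic branch comes from the convex-combination bound \eqref{eq:first:trunc}. The probabilistic branch splits the rows: saturated rows contribute at most $m/n$ because $\widetilde{D}$ acts as $\sqrt{m/n}\,I$ there, and unsaturated rows are controlled by entrywise domination by $SD\boldsymbol{\alpha}$ followed by Markov's inequality. The only difference is cosmetic: the paper routes the unsaturated term through an auxiliary weight vector $\bar{w}_i = m\alpha_i^2/L^2$ extended over all indices, while you compute $\mathbb{E}X = JL^2/n \le L^2$ directly over the unsaturated indices, which is slightly cleaner.
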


This completes the last ingredient needed for the optimized recovery guarantee stated in \Cref{loc:optimized_bernoulli_cs_on_union_of_subspaces.statement}. The proof follows by combining \Cref{loc:cs_with_bernoulli_and_denoising_on_unions_of_subspaces.statement} with \Cref{loc:tail_on_the_noise_sensitivity_for_adapted_bernoulli_sampling.statement} via a union bound (see \hyperlink{loc:optimized_bernoulli_cs_on_union_of_subspaces.proof}{the proof} in \Cref{loc:body.proofs.union_bounds}).

Next, we examine the difference between Bernoulli sampling and without-replacement
sampling. In the latter, one repeatedly draws measurement vectors according to a fixed distribution (as in with-replacement sampling), but rejects any draw that repeats a previously selected vector, continuing until 
$m$ distinct measurement vectors are obtained. Equivalently, this procedure can be viewed as sequential sampling without replacement in which, after each accepted draw, the distribution is renormalized over the remaining (unselected) vectors. For completeness, we prove this equivalence in \Cref{app:equivalence_sampling}. Without-replacement sampling 
avoids the pitfall of with-replacement sampling, which often
re-samples measurement vectors with diminishing returns. This 
drawback becomes pronounced for optimized sampling schemes, as we show in
\Cref{loc:body.numerics}. Optimized without-replacement 
sampling with theoretical guarantees was introduced in \cite{hoppeSamplingStrategiesCompressive2023}.

In the next section we outline a toy example consisting of
a simple local coherence vector, for which we show that
optimized Bernoulli sampling significantly outperforms optimized without-replacement in sampling complexity.
\section{Toy example}
\label{loc:body.toy_example}
Consider a prior set $\mathcal{Q}  \subseteq \mathbb{R}^n$ to be the
union of $\{\boldsymbol{e}_1\}$ and of a subspace $\mathcal{U}$ consisting of vectors supported on $\{2,  ...,  n\}$, and which is maximally incoherent with
the canonical basis. 

We now define $\mathcal{U}$ explicitly: let $A  \in \mathbb{R}^{n-1 \times n-1}$ be the discrete Fourier
transform matrix in $\mathbb{R}^{n-1}$, and let $M  \in \mathbb{R}^{n \times n-1}$ be the matrix $A$
padded with zeros in its first row, in the sense that $M_{1, j} = 0$, and $M_{i, j} = A_{i-1,  j}$ for
$i  \in \{2,  \ldots,  n\}$ and $j  \in [n-1]$. We define $\mathcal{U}$ to be the span of the first $k-1$ columns of $M$.

Next, take the identity matrix $I$ as the unitary measurement matrix $F$, so that the CS matrix $SF$ reduces to 
$S$. 

For this choice of $\mathcal{Q}$ and $F$, one can check that $S$ has the RIP if and only if the following hold:
\begin{enumerate}
\item The measurement vector $\boldsymbol{e}_1$ is sampled,
\item Measurement vectors corresponding to at least $k-1$ distinct indices out of $\{2,  \ldots, n\}$ are sampled.
\end{enumerate}

We show that in this simple setting, optimized without-replacement sampling fails
to sample the first measurement vector with constant probability as $n \to 
\infty$, even for large
$k$. In contrast, under the same setup, the probability of failure
for the optimized Bernoulli selector sampling scheme vanishes.

Since $\mathcal{U}$ is a fully-incoherent $(k-1)$-dimensional subspace in the $(n-1)$-dimensional subspace of vectors supported on $\{2, \ldots, n\}$, it follows from \cite[Proposition 3.1]{berkCoherenceParameterCharacterizing2022} that the local coherence vector of $I$ with respect to $\mathcal{Q}$ is 
$$\boldsymbol{\alpha} = \left\{1, \sqrt{\frac{k-1}{n-1}},  \ldots,  \sqrt{\frac{k-1}{n-1}}\right\}.$$
Then the optimized probability vector defined in~\cite[Definition 2.6]{planDenoisingGuaranteesOptimized2025} is 
$$\boldsymbol{p}^* =  \left\{\frac{1}{k}, \frac{k-1}{k(n-1)}, \ldots, \frac{k-1}{k(n-1)}\right\}.$$
Sampling sequentially without replacement according to $\boldsymbol{p}^*$, the probability that the first measurement vector fails to be sampled on the next draw after $\ell$ draws is $$1- \frac{1}{k} \frac{1}{1- \ell \frac{k-1}{k(n-1)}},$$ which, when $m << n$, is approximately $\approx 1- \frac{1}{k}$.
Therefore, as $n \to \infty$ the probability of missing the first measurement $\ell$ times in a row converges to $\left(1- \frac{1}{k}\right)^m$.
With $m = 2k$,
\begin{equation*}
\lim_{k  \to  \infty}\left(1- \frac{1}{k}\right)^{2k}  = \exp(-2),
\end{equation*}
and for any $k  \ge 4$, $\left(1- \frac{1}{k}\right)^{2k} \ge 0.1$ (note that this limit is to be taken only after taking $n  \to  \infty$). So even for arbitrarily large values of $m$ and $k$, so long as $m << n$, there is a fixed positive probability that the RIP fails to be satisfied for the optimized without-replacement sampling scheme. 

On the other hand, optimized Bernoulli selector sampling will have the RIP with vanishing probability of failure as $m$ gets larger. Indeed, for $m \ge k$, the optimized probability weights are $\boldsymbol{w}^* = \{1, \frac{m-1}{n-1}, \ldots, \frac{m-1}{n-1}\}$, whereby the first measurement is always sampled. There is still a positive probability that the RIP will not be satisfied due to the total number of sampled measurements $\tilde{m}$ potentially falling below the required $k$. The probability of this event vanishes for $m=2k$ as $k  \to  \infty$ because $\tilde{m}$ concentrates around $m$ with sub-Gaussian norm of order $\sqrt{m}$ (see, e.g., \cite{rudelsonSparseReconstructionFourier2008}).
\section{Numerics}
\label{loc:body.numerics}
We performed a range of experiments measuring the performance of the
optimized Bernoulli sampling scheme in signal recovery problems, recovering
images from the CELEBA dataset~\cite{liuDeepLearningFace2015}, which are of
size $128\times 112\times 3$, with total dimension $n = 43008$.
For the unitary part of the CS matrix $F$ we use the
two-dimensional DFT, applied channel-wise on the three color channels of the
images.
We run every experiment on two different prior sets: sparse signals in
the Haar wavelet basis with three levels, and signals generated by
a feedforward neural network. 
We briefly describe how we
compute an (approximate) local coherence vector, and then (approximately) solve the recovery optimization
problem of \Cref{eq:opt:reconstruct} in each setting. 
\subsection{Sparsity-based prior}
\label{loc:body.numerics.sparsity:based_prior}
We consider images from the CELEBA dataset that are truncated to their largest
500 coefficients (1\%) in the Haar wavelet basis of three levels.
The coherence vector $\boldsymbol{\alpha}$ is computed with respect to the Haar basis vectors. While this does not exactly match the assumptions of our theory---which require coherence relative to the set of $k$-sparse vectors rather than $1$-sparse vectors---it serves as a tractable heuristic consistent with standard notions of local coherence in sparse signal recovery.
For the Haar basis $\Phi$, a measurement vector $\boldsymbol{f}_i$ has a local coherence $\max_{\boldsymbol{\phi} \in \Phi} \lvert \boldsymbol{f}_i^* \boldsymbol{\phi} \rvert$. The computation is sped up significantly by considering a representative subset of the Haar basis vectors, including one Haar wavelet basis vector for each block of Haar wavelets, utilizing the 
fact that translations of the support of wavelet vectors does not affect
their local coherences in the Fourier basis. This method contrasts with that of~\cite{krahmerStableRobustSampling2014}, where they derive an upper-bound on
the local coherences.

To recover the signal, we utilize basis pursuit denoising with the SPGL1
solver. From the solution, we take as support the indices supporting the most mass.
Restricting to this support, we then solve a least-squares problem, which yields our recovered
signal $\hat{\boldsymbol{x}}$.
\subsection{Generative prior}
\label{loc:body.numerics.generative_prior}
We train a convolutional VAE on the CELEBA
dataset, using the decoder part of the VAE as our feedforward neural network
$G:\mathbb{R}^k \to \mathbb{R}^n$ with $k = 200$ (0.5\%). We compute the coherence vector with
respect to a
random sample in the range of $G$, in the manner
described in
\cite{cardenasCS4MLGeneralFramework2023, berkModeladaptedFourierSampling2023}.
We then attempt to recover a randomly sampled true signal $\boldsymbol{x}_0 =  G(\boldsymbol{z})$
for $\boldsymbol{z} \sim \mathcal{N}(0, I_k)$. We 
sample $S$,  
compute the measurements $\boldsymbol{b} = SF \boldsymbol{x}_0$, and solve the optimization problem
\begin{equation*}
\minimize_{\boldsymbol{z} \in \mathbb{R}^k} \, \lVert \widetilde{D} SF G(\boldsymbol{z}) -\widetilde{D}\boldsymbol{b} \rVert_{2}^2.
\end{equation*}
To solve this program we use the LBFGS~\cite{liuLimitedMemoryBFGS1989} algorithm with 4 steps, running 5 attempts
with random restarts, and using the result of the attempt with the smallest objective value.
We find an approximate minimizer $\hat{\boldsymbol{z}} \in \mathbb{R}^k$,
and a recovered signal $\hat{\boldsymbol{x}} :=  G(\hat{\boldsymbol{z}})$.
\subsection{Experiments}
\label{loc:body.numerics.experiments}
In all experiments, we use an adjusted version of Bernoulli selector
sampling: with the optimized Bernoulli probability weights, we resample $S$
until exactly $m$ measurement vectors are
selected. This is numerically trivial, and is justified by the fact that the
costly operation in the compressed sensing model is not in sampling the measurement vectors,
but in actually measuring the true signal
(computing $\boldsymbol{b} := SF \boldsymbol{x}_0$).
The sampling distribution in our experiments is therefore optimized Bernoulli selectors
conditioned on the total number of sampled measurement vectors matching
the expected number of sampled measurements $m$. 
In all experiments we compute the relative error $\|\hat{\boldsymbol{x}} - \boldsymbol{x}_0\|_2/\|\boldsymbol{x}_0\|_2$.
\begin{figure}[!t]
\centering
\includegraphics[width=\textwidth,alt={Two log-scale plots of relative reconstruction error versus the number of measurements m, with generative signals on the left and sparse signals on the right. Three Bernoulli sampling schemes are compared: homogeneous, deterministic, and optimized. In both panels, optimized sampling reaches low error first, deterministic is close behind, and homogeneous sampling improves much later.}]{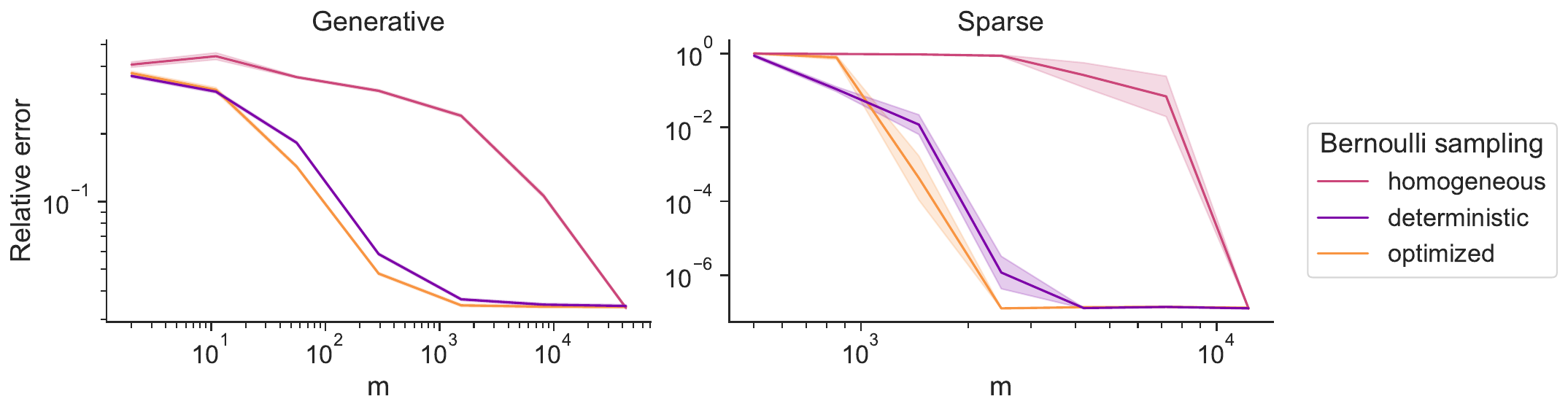}
\caption{The generative plot has 200 experiments for each data point, and
    the sparse plot has 20.  Sparsity level is $k = 500$ (1\%) and the code dimension of the generative model is $k = 200$ (0.5\%). We display a line for geometric mean and a band for the geometric standard error (the uncertainty of the geometric mean estimator).
}\label{fig:partially_det_advantage.pdf}
\end{figure}

Optimized sampling is a compromise between two canonical approaches: using the set of measurement vectors with the highest coherence, 
and using a uniformly random subset of the measurement vectors which diversifies between all possible
measurement vectors. Uniform randomization was the main object of study in compressed sensing. In \Cref{fig:partially_det_advantage.pdf} we see that when comparing
optimized sampling with  alternatives, optimized sampling generally performs the best. 
We note that for very sparse signals ($k < 200$), we found that
deterministic sampling outperforms optimized sampling.
\begin{figure}[!t]
\centering
\includegraphics[width=\textwidth,alt={Two log-scale plots of relative reconstruction error versus the number of measurements m, with generative signals on the left and sparse signals on the right. Three optimized sampling implementations are compared: with-replacement, without-replacement, and Bernoulli. Bernoulli and without-replacement perform similarly and reach low error earlier than with-replacement, especially in the sparse setting.}]{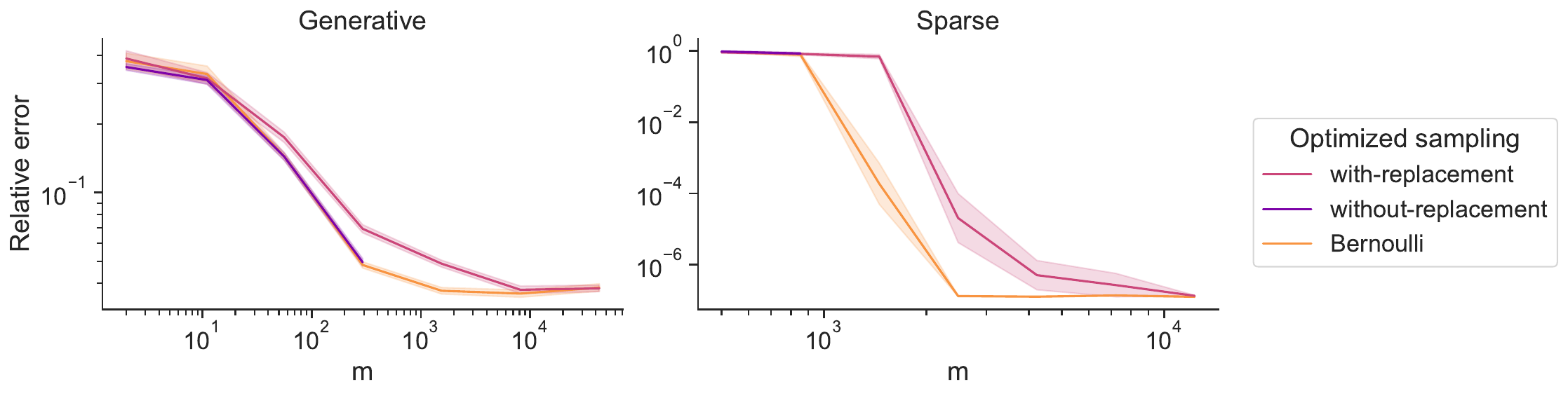}
\caption{Comparison between optimized sampling distribution. 
    The generative plot has 200 experiments for each data point, and
    the sparse plot has 20. We display a line for geometric mean and a band for the geometric standard error (the uncertainty of the geometric mean estimator).  Sparsity level is $k = 500$ (1\%) and the code dimension of the generative model is $k = 200$ (0.5\%). }\label{fig:main_ber_comparison.pdf}
\end{figure}

In \Cref{fig:main_ber_comparison.pdf} we compare Bernoulli selector sampling to with-replacement sampling,
a sampling distribution common in the literature (see, e.g., \cite{planDenoisingGuaranteesOptimized2025}).
We find that
optimized Bernoulli selector sampling outperforms with-replacement sampling,
which can be explained by the tendency of optimized with-replacement sampling 
to include the same high-coherence measurement vectors repeatedly in the
CS matrix. Redundant measurement vectors do not provide any additional information
about the true signal, other than in their denoising properties, so the CS
matrix will effectively contain a smaller number of measurement vectors. 
This shortcoming was addressed in~\cite{hoppeSamplingStrategiesCompressive2023},
where the authors provided theoretical guarantees for an optimized
without-replacement sampling scheme, which we include in \Cref{fig:main_ber_comparison.pdf}.
The preconditioner introduced in~\cite{hoppeSamplingStrategiesCompressive2023}
simulates sampling vectors with-replacement until there are the desired
number of distinct measurement vectors. Then, they include each distinct
measurement vector only once in the CS matrix, removing the measurement
cost of duplicate measurement vectors. They incorporate the count of
redundant sampling for each measurement vector in the preconditioner,
so that theoretical guarantees from with-replacement sampling carry
over to this without-replacement sampling counterpart.
The main drawback of this method is that to compute this ``empirical" preconditioner,
it is necessary to use rejection sampling, with no recourse to conditional
probability distributions, because the preconditioner is formulated with the count of rejections for each measurement vector. 

Sampling this way proved computationally infeasible for large numbers of measurements and steeply decaying local coherences. Indeed, once the measurement vectors with high local coherence
 have already been
sampled, the remaining unsampled measurement vectors have vanishingly small 
combined probability mass relative to their sampled counterparts.
The number of attempts required to sample the next
novel measurement vector becomes too large for even a fast loop on a computer.
We stopped evaluating this method in \Cref{fig:main_ber_comparison.pdf} when
the computational cost grew too large, even with reasonably optimized code (whenever the estimated number of sampling attempts reached $10^7$).
\begin{figure}[ht]
\centering
\includegraphics[width=\textwidth,alt={Two log-scale plots of relative reconstruction error versus the number of measurements m, with generative signals on the left and sparse signals on the right. Three methods are compared: optimized without-replacement sampling with heuristic preconditioning, optimized without-replacement sampling with empirical preconditioning, and optimized Bernoulli sampling with Bernoulli preconditioning. The Bernoulli and empirical without-replacement curves are nearly identical and outperform the heuristic preconditioner, especially for sparse signals at intermediate values of m.}]{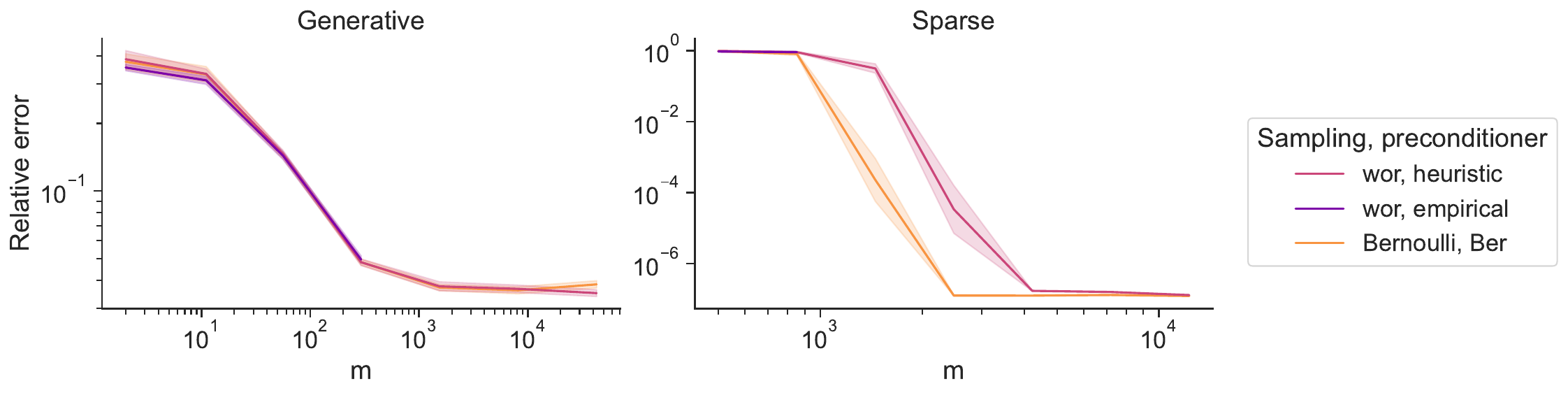}
\caption{Comparison of optimized without-replacement sampling with two different preconditioners with the optimized Bernoulli sampling scheme. The label ``wor" denotes optimized without-replacement sampling,
    with ``empirical" preconditioning as in~\cite{hoppeSamplingStrategiesCompressive2023},
    and ``heuristic" preconditioning as introduced in the text. The label ``Bernoulli, Ber" denotes the optimized Bernoulli sampling scheme with the Bernoulli preconditioner. The generative plot has 200 experiments for each data point, and
    the sparse plot has 20. We display a line for geometric mean and a band for the geometric standard error (the uncertainty of the geometric mean estimator). Note that
    in the Generative plot, the Bernoulli curve on the left plot is mostly underneath the ``wor, heuristic" curve. Sparsity level is $k = 500$ (1\%) and the code dimension of the generative model is $k = 200$ (0.5\%).
\label{fig:with_heuristic.pdf}}
\end{figure}

In \Cref{fig:with_heuristic.pdf}, we sidestep the computational difficulties of
without-replacement sampling with the empirical preconditioner of~\cite{hoppeSamplingStrategiesCompressive2023}
by using a alternative, heuristic, but easily computable
preconditioner $D$, which we now describe.
Given that we are sampling without-replacement with probabilities
$\boldsymbol{p}$ with $\sum_{i = 1}^n p_i = 1$,
we use a heuristic for the marginal sampling probabilities
of measurement vectors to be $w_i = 1 - \exp(- \lambda p_i)$
for some constant $\lambda > 0$
chosen so that $\sum_{i = 1}^n w_i = m$. We then use
the preconditioner $D$ with diagonal entries
$d_i = \sqrt{\frac{m}{ n w_i}}$. Though this preconditioner lacks theoretical guarantees,
in \Cref{fig:with_heuristic.pdf} it performs slightly better than the preconditioner from \cite{hoppeSamplingStrategiesCompressive2023}, 
in the regime where the empirical preconditioner can be efficiently computed.

One important numerical result we find is that in \Cref{fig:with_heuristic.pdf}, the optimized Bernoulli sampling scheme
outperforms the optimized without-replacement sampling
scheme in the sparse setting. We believe this occurs because the local
coherences have a heavier tail in the sparse setting
than in the generative setting. A heavy tail provides the potential of
``distracting" the optimized without-replacement sampling scheme away from
the most important measurement vectors, whereas optimized Bernoulli is immune
to this kind of distraction because of
its ability to sample the important measurement vectors
deterministically, as exemplified in the toy problem of \Cref{loc:body.toy_example}.

We note that in other experiments not presented in this paper, we found
similar performance when using the preconditioner from the optimized Bernoulli sampling scheme
paired with the optimized without-replacement sampling scheme, as when using the heuristic preconditioner introduced above. 
Additionally, in a preliminary investigation we found that it is possible
to achieve moderate performance gains by
regularizing the preconditioner with a small additive constant in the
denominator ($d_i = \sqrt{\frac{m}{n (w_i + 10^{-7}m) }}$). We do not use this constant in the experiments
presented herein for closer
correspondence to our theory.
\section{Proofs}
\label{loc:body.proofs}
\subsection{RIP of preconditioned subsampled unitary matrices}
\label{loc:body.proofs.rip_of_preconditioned_subsampled_unitary_matrices}
We begin with the case where the prior set is a single subspace.
\begin{lemma}[Deviation of Bernoulli matrix on subspace]
\label{loc:deviation_of_bernoulli_matrix_on_subspace.statement}
Let $F \in \measfield^{n \times n}$ be a unitary matrix, $S \in  \mathbb{R}^{m \times  n}$ a \hyperref[loc:bernoulli_selector_sampling_matrix.statement]{Bernoulli selector sampling matrix} with probability weights $\boldsymbol{w} \in (0,1]^n \cap m\Delta^{n-1}$. Let $\boldsymbol{\alpha} \in \mathbb{R}^n_{++}$ be the local coherences of $F$ with respect to a $\ell-$dimensional subspace $\mathcal{U} \subseteq \field^n$. Let $D \in  \mathbb{R}^n$ be a diagonal pre-conditioning matrix with diagonal entries $D_{i,i}= \sqrt{\frac{ m }{nw_i}}$. Let $\gamma$ be as defined in~\Cref{loc:tight_bernoulli_complexity.statement}. Let $t > 0$. Then
\begin{equation*}
\sup_{\boldsymbol{x} \in \mathcal{U}\cap \mathbb{S}^{n-1}} \left| \|SDF\boldsymbol{x}\|_{2}-1 \right|  \lesssim  \frac{\gamma(\boldsymbol{\alpha}, \boldsymbol{w})}{\sqrt{ m}}\sqrt{\log \ell} + \frac{\gamma(\boldsymbol{\alpha}, \boldsymbol{w})}{ \sqrt{m}} t
\end{equation*}
with probability at least $1-2\exp(-t^2)$.
\end{lemma}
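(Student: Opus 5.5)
We reduce the claim to a matrix concentration statement for a sum of independent random self-adjoint matrices. Fix an orthonormal basis $U \in \measfield^{n\times \ell}$ of $\mathcal{U}$, so that $\sup_{\boldsymbol{x}\in\mathcal{U}\cap\mathbb{S}^{n-1}} \bigl|\|SDF\boldsymbol{x}\|_2^2-1\bigr| = \|U^*F^*DS^*SDFU - I_\ell\|$. Since $S^*S = \frac{n}{m}\Diag(\boldsymbol{\xi})$ and $D_{i,i}^2 = \frac{m}{nw_i}$, we get
\begin{equation*}
U^*F^*DS^*SDFU - I_\ell = \sum_{i=1}^n \Bigl(\frac{\xi_i}{w_i}-1\Bigr)\boldsymbol{v}_i\boldsymbol{v}_i^* =: \sum_{i=1}^n X_i, \qquad \boldsymbol{v}_i := U^*\boldsymbol{f}_i ,
\end{equation*}
using $\sum_i \boldsymbol{v}_i\boldsymbol{v}_i^* = U^*F^*FU = I_\ell$. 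The $X_i$ are independent, centered, and $\ell\times\ell$. Crucially, $X_i=0$ whenever $w_i=1$, so saturated rows drop out entirely; this explains the indicator in $\gamma$. We also have $\|\boldsymbol{v}_i\|_2 = \sup_{\boldsymbol{x}\in\mathcal{U}\cap B_2}|\boldsymbol{f}_i^*\boldsymbol{x}| = \alpha_i$.

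Next we compute the two parameters needed for matrix Bernstein. For the uniform bound, $\|X_i\| \le \alpha_i^2 \max\bigl(\frac{1-w_i}{w_i},1\bigr)\indicator_{\{w_i<1\}} \le \gamma^2/m$, since $|\xi_i/w_i - 1|$ is either $1$ or $(1-w_i)/w_i$. For the variance, $\mathbb{E}X_i^2 = \frac{1-w_i}{w_i}\alpha_i^2\,\boldsymbol{v}_i\boldsymbol{v}_i^*\indicator_{\{w_i<1\}} \preceq \frac{\gamma^2}{m}\boldsymbol{v}_i\boldsymbol{v}_i^*$. Summing and using $\sum_i\boldsymbol{v}_i\boldsymbol{v}_i^* = I_\ell$ gives $\bigl\|\sum_i\mathbb{E}X_i^2\bigr\| \le \gamma^2/m$. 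Here the $\max(\cdot,1)$ in $\gamma$ is used only for the uniform bound.

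Matrix Bernstein then yields, with probability at least $1-2\exp(-t^2)$,
\begin{equation*}
\Bigl\|\sum_i X_i\Bigr\| \lesssim \frac{\gamma}{\sqrt m}\bigl(\sqrt{\log \ell}+t\bigr) + \frac{\gamma^2}{m}\bigl(\log \ell + t^2\bigr).
\end{equation*}
Finally we pass from squared norms to norms via $|a-1|\le |a^2-1|$ for $a\ge 0$.

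The main obstacle is the quadratic term $\frac{\gamma^2}{m}(\log\ell+t^2)$. In the regime of interest, $\frac{\gamma}{\sqrt m}(\sqrt{\log\ell}+t)\lesssim 1$, it is dominated by the linear term, and this is the only way the lemma is used (via \Cref{loc:rip_of_bernoulli_selector_non:uniform_sampling_matrix_on_a_union_of_subspaces.statement}). Outside that regime, the stated bound can fail as written, since the deviation of the norm is not bounded by a linear function of $\gamma/\sqrt m$. I would handle this first by restricting to the regime $\gamma(\sqrt{\log\ell}+t)\le\sqrt m$, where the quadratic term is absorbed into the constant. If needed, I would instead use a sharper route: either the norm-level inequality $|a-1|\le\min(|a^2-1|,\sqrt{|a^2-1|})$, or a Rudelson-type symmetrization argument that bounds the deviation of $\|SDF\boldsymbol{x}\|_2$ directly. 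Either route gives the linear rate across the full range in which the right-hand side is at most a constant.
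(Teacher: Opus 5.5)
Your reduction to $\sum_i(\xi_i/w_i-1)\boldsymbol{v}_i\boldsymbol{v}_i^*$ with $\|\boldsymbol{v}_i\|_2=\alpha_i$ matches the paper's proof. So do the observation that saturated rows contribute nothing and the matrix Bernstein parameters $K\le\gamma^2/m$ and $\sigma^2\le\gamma^2/m$. One small point: for $\mathbb{K}=\mathbb{C}$ your summands are complex Hermitian, so cite the Hermitian form of matrix Bernstein rather than the real-symmetric version quoted in the paper. It has the same shape, with the factor $2\ell$.

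The problem is your last paragraph. The ``obstacle'' is not real, and your claim that the bound can fail as written when $\gamma(\sqrt{\log\ell}+t)\gtrsim\sqrt m$ is false. For $a\ge 0$ you have $|a^2-1|=|a-1|(a+1)$ with $a+1\ge\max(1,|a-1|)$, hence $\max(|a-1|,|a-1|^2)\le|a^2-1|$.

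Put $\delta:=C\frac{\gamma}{\sqrt m}(\sqrt{\log\ell}+t)$ with $C$ a large absolute constant. Your Bernstein estimate then gives $\sup_{\boldsymbol{x}}|\|SDF\boldsymbol{x}\|_2^2-1|\le\max(\delta,\delta^2)$ with probability at least $1-2\exp(-t^2)$. Equivalently, evaluate the tail $2\ell\exp(-c\frac{m}{\gamma^2}\min(\epsilon^2,\epsilon))$ at $\epsilon=\max(\delta,\delta^2)$, for which $\min(\epsilon^2,\epsilon)=\delta^2$.

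Now $\max(|a-1|,|a-1|^2)\le\max(\delta,\delta^2)$ forces $|a-1|\le\delta$, for every $\boldsymbol{x}$ and every value of $\delta$. When $\delta>1$, the square-root branch gives $|a-1|\le\delta$, not merely $O(1)$ as you suggest. So the lemma holds exactly as stated for all $t>0$, and the Rudelson-type detour is unnecessary. Your primary plan of restricting to the small-deviation regime would prove strictly less than the statement. This one-line manipulation is precisely the ``algebraic manipulation'' the paper delegates to Lemma A.1 of the earlier with-replacement paper.
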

\begin{proof}[\hypertarget{loc:deviation_of_bernoulli_matrix_on_subspace.proof}Proof of \Cref{loc:deviation_of_bernoulli_matrix_on_subspace.statement}]

For some $\boldsymbol{w} \in (0,1]^n \cap \Delta^{n-1}$, let $\xi_i \sim Ber(w_i),$ for $i \in [n]$, be independently distributed random variables. With a slight abuse of notation, let $S \in \{ 0, 1 \}^{n \times n}$ be a square diagonal matrix with entries $S_{i,i}= \sqrt{ \frac{n}{m} }\xi_i$. This definition differs from the Bernoulli sampling matrix of~\Cref{loc:bernoulli_selector_sampling_matrix.statement} only in the fact that we do not drop the rows that take on a value of $0$. Consider that this modification does not affect the truthfulness of \Cref{loc:deviation_of_bernoulli_matrix_on_subspace.statement} because the quantity $\lVert SDFx\rVert_2$ remains unchanged.

In what follows, we will use the fact that $\forall \boldsymbol{x} \in  \mathcal{U}$, $SDF\boldsymbol{x} = SDFP_\mathcal{U}^*P_\mathcal{U}\boldsymbol{x}$ where $P^*_\mathcal{U} \in \field^{n \times \ell}$ is a matrix whose columns make an orthonormal basis for $\mathcal{U}$. Notice that $P_\mathcal{U}^* P_\mathcal{U} = \proj_\mathcal{U} \in \field^{n \times n}$, the orthogonal projection on to $\mathcal{U}$. Now consider
\begin{align}
(\star) :=\sup_{\boldsymbol{x} \in \mathcal{U} \cap \sphere{n}} \left| \left\| SDF\boldsymbol{x}\right\|_{2}^2 - 1 \right|
 & = \sup_{\boldsymbol{x} \in \mathcal{U} \cap \sphere{n}} \left| \left\| S D F P_\mathcal{U}^* P_\mathcal{U} \boldsymbol{x}\right\|_{2}^2 - 1 \right|                         \label{eq:op_norm:1}\\
 & = \sup_{\boldsymbol{u} \in \field^\ell \cap \sphere{\ell}} \left| \left\| S D F P_\mathcal{U}^* \boldsymbol{u}\right\|_{2}^2 - 1 \right|                                   \label{eq:op_norm:2}\\
 & = \sup_{\boldsymbol{u} \in \field^\ell \cap \sphere{\ell}} \left| \boldsymbol{u}^* \left[ P_{\mathcal{U}} F^* D S^* S D F P_{\mathcal{U}}^* - I \right]\boldsymbol{u} \right|. \label{eq:op_norm:3}
\end{align}

\Cref{eq:op_norm:2}  follows from a change of variables $P_\mathcal{U} \boldsymbol{x} = \boldsymbol{u} \in \field^\ell$.  The matrix in the square brackets is Hermitian, and therefore, 
$\boldsymbol{u}^* [\ldots] \boldsymbol{u}$ is a real number (see, e.g., by \cite[Result 7.15]{axlerLinearAlgebraDone2024}). We can
therefore take the real part of the Hermitian matrix:
\begin{align}
& = \sup_{\boldsymbol{u} \in \field^\ell \cap \sphere{\ell}} \left| \boldsymbol{u}^* \mathcal{R} \left[ P_{\mathcal{U}} F^* D S^* S D F P_{\mathcal{U}}^* - I \right] \boldsymbol{u} \right| \label{eq:thing_to_bound:1}\\ 
& = \sup_{\boldsymbol{u} \in \field^\ell \cap \sphere{\ell}} \left| \boldsymbol{u}^* \mathcal{R} \left[ \sum_{i=1}^{n} \left( \frac{1}{w_i} \xi_i - 1 \right)P_\mathcal{U} F^* \boldsymbol{e}_i \boldsymbol{e}_i^* F P_\mathcal{U}^* \right]\boldsymbol{u} \right| \label{eq:thing_to_bound:2}\\ 
& =\left\| \sum_{i=1}^{n} \left( \frac{\xi_i }{ w_i} - 1 \right)\mathcal{R}[ P_\mathcal{U} F^* \boldsymbol{e}_i \boldsymbol{e}_i^* F P_\mathcal{U}^* ] \right\|. \label{eq:thing_to_bound:3}\\
\end{align}
Notice that we have a sum of independent random matrices. As the central ingredient of this proof, we make use of the Matrix Bernstein inequality to bound \Cref{eq:thing_to_bound:3}.
\begin{lemma}[Matrix Bernstein]
\label{loc:matrix_bernstein.statement}
Let $X_1, ..., X_N$ be independent, mean zero, symmetric random matrices in
$\mathbb{R}^{\ell \times \ell}$, such that $||X_i|| \leq K$ almost surely for
all $i \in [N]$. Then, for every $t\geq 0$, we have
\begin{equation*}
\mathbb{P} \left\{ \left\lVert \sum_{i=1}^N X_i \right\rVert \geq t \right\} \leq 2\ell\exp \left( - \frac{t^2/2}{\sigma^2 + Kt/3}\right),
\end{equation*}
where $\sigma^2=\lVert\sum_{i=1}^N \mathbb{E}X_i^2\rVert$.
\end{lemma}
We first find $K$. For any $i  \in [n]$,
\begin{align*}
&\left\|\left( \frac{1}{ w_i} \xi_i - 1 \right)\mathcal{R}[ P_\mathcal{U} F^* \boldsymbol{e}_i \boldsymbol{e}_i^* F P_\mathcal{U}^* ]\right\|\\
 \leq &\max\left(\frac{1-w_i}{w_i} , 1 \right) \indicator_{\{w_i < 1\}} \sup_{\boldsymbol{u} \in \field^\ell \cap \sphere{\ell}} | \boldsymbol{u}^* \mathcal{R}[ P_\mathcal{U} F^* \boldsymbol{e}_i \boldsymbol{e}_i^* F P_\mathcal{U}^* ] \boldsymbol{u}|\\
= &\sup_{\boldsymbol{u} \in \field^\ell \cap \sphere{\ell}} |  \mathcal{R}[\| \boldsymbol{e}_i^* F P_\mathcal{U}^*\boldsymbol{u} \|_2^2] | \max\left(\frac{1-w_i}{w_i} , 1 \right) \indicator_{\{w_i < 1\}} \\
 = &\alpha_i^2\max\left(\frac{1-w_i}{w_i} , 1 \right) \indicator_{\{w_i < 1\}}\\
\leq &\frac{\gamma^2(\boldsymbol{\alpha}, \boldsymbol{w})}{m}
\end{align*}
Now for $\sigma^2$, with $\boldsymbol{v}_i = P_\mathcal{U} F^* \boldsymbol{e}_i$,
\begin{align}
\sigma^2 & = \left\|\sum_{i=1}^n \mathbb{E}\left[\left( \frac{\xi_i}{w_i}  - 1 \right)^2\mathcal{R}(\boldsymbol{v}_i \boldsymbol{v}_i^*)^2\right] \right \| \\
& \le  \sum_{i=1}^n \max\left(\frac{1-w_i}{w_i} , 1 \right) \indicator_{\{w_i < 1\}}\left\| \mathcal{R}(\boldsymbol{v}_i \boldsymbol{v}_i^*)^2 \right \|.
\end{align}
To bound $\left\| \mathcal{R}(\boldsymbol{v}_i \boldsymbol{v}_i^*)^2 \right \|$,  we introduce the unit vector $\hat{\boldsymbol{y}}$ to be the normalization of $\boldsymbol{y} := \mathcal{R}[\boldsymbol{v}_i \boldsymbol{v}_i^*]\boldsymbol{x}$, and then
\begin{align*}
\boldsymbol{x}^*  \mathcal{R}[\boldsymbol{v}_i \boldsymbol{v}_i^*]\mathcal{R}[\boldsymbol{v}_i \boldsymbol{v}_i^*]\boldsymbol{x}  &= \boldsymbol{x}^*  \mathcal{R}[\boldsymbol{v}_i \boldsymbol{v}_i^*]\hat{\boldsymbol{y}}\hat{\boldsymbol{y}}^* \mathcal{R}[\boldsymbol{v}_i \boldsymbol{v}_i^*]\boldsymbol{x}\\
&=  \mathcal{R}[\boldsymbol{x}^*  \boldsymbol{v}_i \boldsymbol{v}_i^*\hat{\boldsymbol{y}}]\mathcal{R}[\hat{\boldsymbol{y}}^* \boldsymbol{v}_i \boldsymbol{v}_i^*\boldsymbol{x}]\\
&\le |\boldsymbol{x}^*  \boldsymbol{v}_i|^2 |\boldsymbol{v}_i^*\hat{\boldsymbol{y}}|^2
 \le (\boldsymbol{x}^*  \boldsymbol{v}_i \boldsymbol{v}_i^* \boldsymbol{x}) \alpha_i^2.
\end{align*}
With this,
\begin{align*}
\sigma^2 &\leq \sum_{i=1}^n (\boldsymbol{x}^*  \boldsymbol{v}_i \boldsymbol{v}_i^* \boldsymbol{x}) \max_{j \in [n]} \alpha_j^2 \max\left(\frac{1-w_j}{w_j} , 1 \right) \indicator_{\{w_j < 1\}}
= \frac{\gamma^2(\boldsymbol{\alpha},\boldsymbol{w})}{m},
\end{align*}
because $\sum_{i = 1}^n \boldsymbol{v}_i \boldsymbol{v}_i^*  =  I$. Then applying Matrix Bernstein yields
\begin{equation*}
\mathbb{P}\left\{ \sup_{\boldsymbol{x} \in \mathcal{U} \cap \sphere{n}} \left| \left\| SDF\boldsymbol{x}\right\|_{2}^2 - 1 \right| \ge t \right\}
\leq 2\ell \exp\left( - \frac{m}{\gamma^2}\min\left(t^2, t\right) \right).
\end{equation*}
The result then follows from algebraic manipulations of this tail inequality (for the details, see, e.g., the proof of \cite[Lemma A.1]{planDenoisingGuaranteesOptimized2025}).
\end{proof}
\begin{proof}[\hypertarget{loc:rip_of_bernoulli_selector_non:uniform_sampling_matrix_on_a_union_of_subspaces.proof}Proof of \Cref{loc:rip_of_bernoulli_selector_non:uniform_sampling_matrix_on_a_union_of_subspaces.statement}]

Perform a union bound on
\Cref{loc:deviation_of_bernoulli_matrix_on_subspace.statement}
applied to each subspace $\mathcal{U}$
making up $\mathcal{T}$. Additional details on this union bound can be found in the proof of \cite[Lemma A.1]{planDenoisingGuaranteesOptimized2025}, which is similar.
\end{proof}
\subsection{Bounding the noise complexity}
\label{loc:body.proofs.bounding_the_noise_complexity}
\begin{proof}[\hypertarget{loc:tail_on_the_noise_sensitivity_for_adapted_bernoulli_sampling.proof}Proof of \Cref{loc:tail_on_the_noise_sensitivity_for_adapted_bernoulli_sampling.statement}]

We combine two upper-bounds.

\textbf{First upper-bound}

It is that of \Cref{eq:first:trunc} in~\Cref{loc:simple_bound_on_the_gaussian_noise_error_factor.statement};
\begin{equation*}
\|\tilde{D} \trunc(SD \boldsymbol{\alpha})\|_2^2 \leq \max(\boldsymbol{d})^2 = L\sqrt{ \frac{1}{n \min(\boldsymbol{\alpha})} }.
\end{equation*}

\textbf{Second upper-bound}

With $U = \{i  \in [n]: w^{\circ}_i < 1\}$,
\begin{equation*}
\|\tilde{D} \trunc(SD \boldsymbol{\alpha})\|_2^2 =  
\|\tilde{D}|_U \trunc(SD \boldsymbol{\alpha})|_U\|_2^2+\|\tilde{D}|_{U^c} \trunc(SD \boldsymbol{\alpha})|_{U^c}\|_2^2.
\end{equation*}
We treat the two terms differently.

\textbf{Term with unsaturated entries}

Let $\bar{\boldsymbol{w}}  \in \mathbb{R}^n_{++}$ be with entries $\bar{w}_i  = \frac{m \alpha_i^2}{L^2}$, which is such that $\bar{\boldsymbol{w}}  \ge \boldsymbol{w}^{\circ}$ entry-wise. Let $\boldsymbol{h}$ be similar to $\boldsymbol{d}$, only that it depends on $\bar{\boldsymbol{w}}$ instead of $\boldsymbol{w}^{\circ}$, i.e.,
$h_i  :=  \sqrt{\frac{m}{n \bar{w}_i}}$, $H := \Diag(\boldsymbol{h})$, and $\tilde{H} := \Diag(\sqrt{\frac{m}{n}}S\boldsymbol{h})$.
Then $\|\tilde{H} \trunc(SH \boldsymbol{\alpha})\|_2^2 \leq \|\tilde{D}|_U \trunc(SD \boldsymbol{\alpha})|_U\|_2^2$. So,
\begin{equation*}
(\star):=\|\tilde{H} \trunc(SH \boldsymbol{\alpha})\|_2^2  \le \|S H^2 \boldsymbol{\alpha}\|_2^2.
\end{equation*}
Because $h_i = \sqrt{ \frac{m}{nw^{\circ}_i} } = \frac{L}{\sqrt{ n } \alpha_i}$, and so $H^2 \boldsymbol{\alpha} =  \frac{L}{\sqrt{ n }}  \boldsymbol{h}$, we get the upper-bound
\begin{equation*}
(\star)  \le \frac{L^2}{n}\|S\boldsymbol{h}\|_2^2.
\end{equation*}
Then by Markov's inequality,
\begin{equation*}
\mathbb{E} \|S \boldsymbol{h}\|_2^2 = \frac{n}{m} \sum_{i = 1}^n w^{\circ}_i h_i^2 = \frac{m}{n}\sum_{i = 1}^n w^{\circ}_i \frac{m}{n w^{\circ}_i} = n.
\end{equation*}
This gives us that with probability at least $1-\delta$,
\begin{equation*}
(\star)  \le \frac{L^2}{\delta}.
\end{equation*}

\textbf{Saturated term}

Now for the saturated terms, with $\boldsymbol{d}$, $D$, $\tilde{D}$ back
to being defined with $\boldsymbol{w}^{\circ}$, notice that $\tilde{D} = \sqrt{\frac{m}{n}}I$,
and that $\trunc(SD \boldsymbol{\alpha})^{.2}$ are convex coefficients, so that
\begin{equation*}
\|\tilde{D}|_{U^c} \trunc(SD \boldsymbol{\alpha})|_{U^c}\|_2^2  \le \frac{m}{n}
\end{equation*}
Combining these upper-bounds, result follows.
\end{proof}
\subsection{Union bounds}
\label{loc:body.proofs.union_bounds}
\begin{proof}[\hypertarget{loc:cs_with_bernoulli_and_denoising_on_unions_of_subspaces.proof}Proof of \Cref{loc:cs_with_bernoulli_and_denoising_on_unions_of_subspaces.statement}]

Let $t_1>0$, and
\begin{equation*}
m \gtrsim \gamma^2(\boldsymbol{\alpha}, \boldsymbol{w})(\log (\ell) + \log M + t_1^2).
\end{equation*}
Then each of the following statements holds individually with probability at least $1- 2\exp(-t^2)$ for a variable $t>0$ defined within each of the results.

\begin{enumerate}
\item The matrix $SDF$ has the RIP on $\mathcal{T}$ thanks to \Cref{loc:rip_of_bernoulli_selector_non:uniform_sampling_matrix_on_a_union_of_subspaces.statement}.
\item When \emph{1.} is satisfied, the recovery error is bounded as specified by \Cref{loc:signal_recovery_with_subsampled_unitary_matrix_with_gaussian_noise_on_union_of_subspaces.bernoulli}.
\end{enumerate}

We distinguish between the variables $t$ used within each of the two statements by
re-labelling them $t_{1}, t_{2}$ respectively. For some $\delta>0$, let
$2\exp(-t_1^2) = \frac{1}{2}\delta$ and $2\exp(-t_{2}^2) = \frac{1}{2}\delta$.
Then $t_{1} = t_2 = \sqrt{ \log \frac{4}{\delta} }$. The fact that the second
statement is conditional on the success of the first only lessens the true
probability of failure, and so the probability of failure is no more than
$\frac{1}{2}\delta + \frac{1}{2}\delta = \delta$. The result follows.
\end{proof}
\begin{proof}[\hypertarget{loc:optimized_bernoulli_cs_on_union_of_subspaces.proof}Proof of \Cref{loc:optimized_bernoulli_cs_on_union_of_subspaces.statement}]

The proof is a minor variation on \hyperlink{loc:cs_with_bernoulli_and_denoising_on_unions_of_subspaces.proof}{the proof} of \Cref{loc:cs_with_bernoulli_and_denoising_on_unions_of_subspaces.statement}.

Let $t_1>0$, and let
\begin{equation*}
m \gtrsim L(\boldsymbol{\alpha}, m)^2 \left( \log \ell+ \log M + t_1^2\right).
\end{equation*}
Each of the following statements holds individually with probability at least $1- 2\exp(-t^2)$ for a variable $t>0$ defined within each of the three results.
\begin{enumerate}
\item The matrix $SDF$ has the RIP thanks to \Cref{loc:rip_of_bernoulli_selector_non:uniform_sampling_matrix_on_a_union_of_subspaces.statement}, by bounding $\gamma$ by $\eta$ and evaluating at the optimized probability weights $\boldsymbol{w}^\circ$.
\item When \emph{1.} is satisfied, the recovery error is bounded as specified by \Cref{loc:signal_recovery_with_subsampled_unitary_matrix_with_gaussian_noise_on_union_of_subspaces.bernoulli}.
\item When \emph{1.} is satisfied, the noise sensitivity $\|\widetilde{D}\trunc(SD \boldsymbol{\alpha})\|_2$ in the recovery error bound of~\Cref{loc:signal_recovery_with_subsampled_unitary_matrix_with_gaussian_noise_on_union_of_subspaces.bernoulli} is bounded thanks to~\Cref{loc:tail_on_the_noise_sensitivity_for_adapted_bernoulli_sampling.statement}.
\end{enumerate}

We distinguish between the variables $t$ used within each of the three statements by re-labelling them $t_{1}, t_{2}, t_{3}$ respectively. For some $\delta>0$, let $2\exp(-t_1^2) = \frac{1}{10}\delta$, $2\exp(-t_{2}^2) = \frac{1}{10}\delta$, and $t_{3} = \frac{8}{10}\delta$. Then $t_{1} = t_2 = \sqrt{ \log \frac{20}{\delta} }$. The required statement then holds with probability at least $1-\delta$ from a union bound on the three statements above for this choice of $t_1, t_2, t_3$.
\end{proof}
\subsection{Optimizing the sampling scheme}
\label{loc:body.proofs.optimizing_the_sampling_scheme}
We begin with a lemma for a larger class of optimization problems.
\begin{lemma}[On the element wise maximized objectives with soft boundaries]
\label{loc:on_the_element_wise_maximized_objectives_with_soft_boundaries.statement}
Take a number of decreasing functions $f_1, \dots, f_n$, which are of the form $f_i:(0,1] \to \mathbb{R} \forall  i \in [n]$. For $\beta > 0$, consider the optimization problem
\begin{equation*}
\minimize \max_{j \in  [n]} f_j(w_j) \text{ s.t. } \boldsymbol{w} \in  \beta\Delta^{n-1}.
\end{equation*}
We have that:
\begin{enumerate}
\item Any point $\boldsymbol{w}^* \in \beta\Delta^{n-1}$ such that
\begin{equation}
\label{eq:optfirst}
\forall i \in [n], \quad  \lim_{w_i \uparrow w^*_i} f_i(w_i) \geq \max_{j \in  [n]} f_j(w^*_j)
\end{equation}
is a minimizer. Here, ``$\uparrow$" denotes a left limit.
\item If the functions $f_1, \dots, f_n$ are moreover \emph{strictly} decreasing, then the minimizer $\boldsymbol{w}^*$ in (1.) is unique.
\end{enumerate}
\end{lemma}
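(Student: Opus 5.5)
The plan is a direct exchange argument for minimizers, followed by a strictness refinement for uniqueness. Fix $\boldsymbol{w}^*$ satisfying \Cref{eq:optfirst} and set $V := \max_{j} f_j(w^*_j)$. Take any feasible $\boldsymbol{w} \in \beta\Delta^{n-1}$ with $\boldsymbol{w} \neq \boldsymbol{w}^*$; we want $\max_j f_j(w_j) \ge V$.

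\textbf{Part 1 (minimality).} Since both vectors have coordinate sum $\beta$ and they differ, some index $i$ satisfies $w_i < w^*_i$. Pick any $s$ with $w_i < s < w^*_i$. Because $f_i$ is decreasing,
\begin{equation*}
f_i(w_i) \ge f_i(s) \ge \lim_{u \uparrow w^*_i} f_i(u),
\end{equation*}
where the limit exists by monotonicity and equals the infimum of $f_i$ over $(w_i, w^*_i)$. By \Cref{eq:optfirst} this limit is at least $V$. Hence $\max_j f_j(w_j) \ge f_i(w_i) \ge V$, so $\boldsymbol{w}^*$ attains the minimum. The only subtlety is the domain: coordinates lie in $(0,1]$, and $w^*_i > w_i > 0$ ensures that the left limit at $w^*_i$ is taken inside the domain.

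\textbf{Part 2 (uniqueness).} Now assume each $f_i$ is strictly decreasing, and let $\boldsymbol{w} \neq \boldsymbol{w}^*$ be feasible. Choose $i$ with $w_i < w^*_i$ as before. Take $s$ strictly between $w_i$ and $w^*_i$, and then $s'$ strictly between $s$ and $w^*_i$. Strict monotonicity gives
\begin{equation*}
f_i(w_i) > f_i(s) \ge f_i(s') \ge \lim_{u \uparrow w^*_i} f_i(u) \ge V.
\end{equation*}
Thus $\max_j f_j(w_j) > V$, so $\boldsymbol{w}$ is not a minimizer and $\boldsymbol{w}^*$ is the unique minimizer. In fact the first strict inequality already suffices: $f_i(w_i) > f_i(s) \ge \lim_{u\uparrow w^*_i} f_i(u)$.

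\textbf{Main obstacle.} The key point is making sure the left limit is the right object. The functions need not be continuous; the $\eta$ objective, for instance, jumps to $0$ at $w_j = 1$ because of the indicator. The inequality in \Cref{eq:optfirst} is deliberately placed on the left limit rather than on $f_i(w^*_i)$, and the argument uses only the values of $f_i$ strictly to the left of $w^*_i$. That is consistent, since decreasing any coordinate below $w^*_i$ is exactly the move that is forced on a competitor. The existence of the index $i$ rests on the equal-sum constraint.
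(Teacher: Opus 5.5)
Your proof is correct and follows the paper's argument. Equal coordinate sums force an index $i$ with $w_i < w^*_i$; monotonicity then gives $f_i(w_i) \ge \lim_{u\uparrow w^*_i} f_i(u) \ge \max_j f_j(w^*_j)$, and under strict monotonicity the first inequality becomes strict, which is what yields uniqueness. Your intermediate point $s$ just spells out why that inequality is strict, a step the paper states in one line.
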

\begin{proof}[\hypertarget{loc:on_the_element_wise_maximized_objectives_with_soft_boundaries.proof}Proof of \Cref{loc:on_the_element_wise_maximized_objectives_with_soft_boundaries.statement}]

Let $\boldsymbol{w}^* \in \beta \Delta^{n-1}$ satisfy \Cref{eq:optfirst}. We argue that it is impossible to find another point $\hat{\boldsymbol{w}} \neq \boldsymbol{w}^*$ in $\beta \Delta^{n-1}$ such that 
\begin{equation}
\label{eq:contramin}
\max_{j \in  [n]} f_j(\hat{w}_j) <  \max_{j \in  [n]} f_j(w^*_j).
\end{equation}
Since $\hat{\boldsymbol{w}} \neq \boldsymbol{w}^*$, and since they both lie in $\beta \Delta^{n-1}$, there is an $i \in [n]$ such that $\hat{w}_i < w^*_i$. Then
\begin{equation}
\label{eq:lim}
f_i(\hat{w}_i) \geq \lim_{w_i \uparrow w^*_i} f_i(w_i) \geq \max_{j \in [n]} f_j(w^*_j),
\end{equation}
where the second inequality follows from \Cref{eq:optfirst}. But this yields a contradiction with \Cref{eq:contramin}, so $\boldsymbol{w}^*$ must be a minimizer.

The second part of the result follows from noting that when the functions are strictly decreasing, the first inequality in \Cref{eq:lim} is strict.
\end{proof}
With \Cref{loc:on_the_element_wise_maximized_objectives_with_soft_boundaries.statement} we prove the results of~\Cref{loc:body.proofs.optimizing_the_sampling_scheme}.
\begin{proof}[\hypertarget{loc:optimize_simple_bernoulli_selector_sampling.proof}Proof of \Cref{loc:optimize_simple_bernoulli_selector_sampling.statement}]

We use 
\Cref{loc:on_the_element_wise_maximized_objectives_with_soft_boundaries.statement}
with the functions $f_i(x) := \alpha_i \sqrt{ \frac{m}{x}}\indicator_{\{x < 1\}}$, which are strictly decreasing for all $i  \in [n]$, and which satisfy $\max_{i \in [n]} f_i(w^{\circ}_i) = \eta(\boldsymbol{\alpha}, \boldsymbol{w}^{\circ})$. Since $\eta(\boldsymbol{\alpha}, \boldsymbol{w}^{\circ}) = L$, it remains only to show that $\forall  i \in [n], \lim_{w_i \uparrow w^{\circ}_i} f_i(w_i) \geq L$.

For any fixed \emph{unsaturated} entry $i \in [n]$, $f_i$ is continuous in a neighborhood of $w^{\circ}_i$, so it is sufficient to show that $f_i(w^{\circ}_i) \ge L$. This holds because with the formula $w^{\circ}_i =\frac{m \alpha^2_j}{L^2}$, one can compute that $f_i(w^{\circ}_i)=L$. 

Second, for any \emph{saturated} entry $i \in [n]$, we know from the definition of $\boldsymbol{w}^{\circ}$ that $\frac{\alpha_i^2 m}{L^2(\boldsymbol{\alpha}, m)} \ge 1$, and therefore,
\begin{equation*}
\lim_{w_i \uparrow 1} f_i(w_i) = \lim_{w_i \uparrow 1} \alpha_i\sqrt{\frac{m}{w_i}} = \alpha_i \sqrt{m} \ge L.
\end{equation*}
\end{proof}
\subsection{Properties of the optimized sampling scheme}
\label{loc:body.proofs.properties_of_the_optimized_sampling_scheme}
\begin{proof}[\hypertarget{loc:norm_of_the_optimized_probability_weights.proof}Proof of \Cref{loc:norm_of_the_optimized_probability_weights.statement}]

WLOG let $\boldsymbol{\alpha}$ be increasing (otherwise, re-index it).
Note that $J$ is well-defined because the set over which we take a maximum is non-empty. Indeed,
\begin{equation*}
\frac{m \alpha_1^2}{R^2(1; \boldsymbol{\alpha}, m)} = 1 - (n - m) <  1,
\end{equation*}
because $m < n$. 

We show that the index $J$ identifies the last unsaturated entry of the optimized weights. The entry $J$ is unsaturated because
\begin{equation*}
w_J^\circ = \frac{m \alpha_J^2}{L^2(\boldsymbol{\alpha}, m)}  = \frac{m
\alpha_J^2}{R^2(J;\boldsymbol{\alpha},  m)}<1
\end{equation*}
by definition of $J$. If $J = n$, $J$ is indeed the last unsaturated entry. If $J < n$, then the entry $J+1$ satisfies
\begin{align*}
&\frac{m \alpha_{J+1}^2}{R^2(J+1; \boldsymbol{\alpha}, m)}  \ge 1 \\
 \implies &\frac{\alpha_{J+1}^2}{\left(\frac{\|\boldsymbol{\alpha}|_{\le J+1}\|_2^2}{m-(n-(J+1))}\right)} \ge 1\\
\implies  &\alpha_{J+1}^2(J-(n-m)+1) \ge \|\boldsymbol{\alpha}|_{\le J+1}\|_2^2\\
\implies &\alpha_{J+1}^2(J-(n-m)) \ge \|\boldsymbol{\alpha}|_{\le J}\|_2^2\\
\implies &\frac{\alpha_{J+1}^2}{\left(\frac{\|\boldsymbol{\alpha}|_{\le J}\|_2^2}{(J-(n-m))}\right)} \ge 1\\
\implies &\frac{m \alpha_{J+1}^2}{L^2(\boldsymbol{\alpha}, m)} \ge 1.
\end{align*}
and so $w_{J+1}^\circ = 1$. Since the expression
$\frac{m\alpha_j^2}{L^2(\boldsymbol{\alpha}, m)}$ is monotone in $j$, this means
that $w_j^\circ = 1$ for $j>J$ and $w_j^\circ<1$ for $j \le J$.

Then
\begin{align*}
\sum_{j = 1}^n w_j^\circ & = \sum_{j = 1}^{J}
\frac{m\alpha_j^2}{L^2(\boldsymbol{\alpha}, m)} + \sum_{j = J+1}^n 1\\
& = \frac{m(J-(n-m))}{m\|\boldsymbol{\alpha}|_{[J]}\|_2^2} \left(\sum_{i = 1}^J \alpha_j^2\right) + n-J\\
& = m.
\end{align*}
\end{proof}
\begin{proof}[\hypertarget{loc:upper_bound_bernoulli_l_with_local_coherences.proof}Proof of \Cref{loc:upper_bound_bernoulli_l_with_local_coherences.statement}]

We begin by showing that the upper-bound holds. Consider $f(x) := \sum_{i=1}^n \min\left(\frac{\alpha^2_i}{x}, 1\right) - m$ and 
$g(x):= \sum_{i = 1}^n \frac{\alpha_i^2}{x} - m$. They are strictly decreasing functions
and $\forall x \in  \mathbb{R}, f(x) \le  g(x)$. It follows that
the root of $g$ must be greater than the root of $f$. By inspection,
the root of $f$ is $L$ and the root of $g$ is $\|\boldsymbol{\alpha}\|_2$,
and so the upper-bound in the statement follows.

For the lower bound, first we recall the definition of the variable $J$ in~\Cref{loc:optimize_simple_bernoulli_selector_sampling.statement} to be $J := \max\left\{  J \in [n]: \frac{\lVert\boldsymbol{\alpha}|_{< J}\rVert_2^2}{\alpha_{J}^2} >   (m - (n-J) - 1) \right\}.$ Then we see that $J \geq n-m+1$ since if we let $J=n-m+1$, the r.h.s. in the inequality in the definition of $J$ is zero, and the l.h.s. is always strictly positive since we assumed strictly positive local coherences.
Then the lower-bound holds because
\begin{equation*}
L^2 := \lVert \boldsymbol{\alpha}|_{\leq J}\rVert_2^2\frac{m}{(m - (n-J))} \ge \lVert \boldsymbol{\alpha}|_{\leq J}\rVert_2^2 \ge \lVert \boldsymbol{\alpha}|_{\leq n-m+1}\rVert_2^2.
\end{equation*}
\end{proof}
\begin{proof}[\hypertarget{loc:monotonicity_of_l_in_m.proof}Proof of \Cref{loc:monotonicity_of_l_in_m.statement}]

For some fixed $m  \in \mathbb{N}$ and local coherences $\boldsymbol{\alpha}  \in \mathbb{R}^n_{++}$, the optimized probability $\boldsymbol{w}^*$ from \Cref{loc:optimize_simple_bernoulli_selector_sampling.statement} is contained in $m\Delta^{n-1}$. Therefore, 
\begin{equation*}
\sum_{i = 1}^n w_i^*  =  \sum_{i = 1}^n \min\left( \frac{m \alpha^2_j}{L^2(m)}, 1 \right) = m,
\end{equation*}
where denote $L$ as a function of $m$.
We solve for its first derivative $L'(m)$ by implicit differentiation. Differentiating by $m$ and isolating $L'(m)$, we find that
\begin{equation*}
\text{sgn}(L') = \text{sgn}\left(1 - \frac{L^2}{\|\boldsymbol{\alpha}|_{ \le J}\|_2^2}\right).
\end{equation*}
That $L  \ge  \|\boldsymbol{\alpha}|_{ \le J}\|_2$ follows from \hyperlink{loc:upper_bound_bernoulli_l_with_local_coherences.proof}{the proof} of \Cref{loc:upper_bound_bernoulli_l_with_local_coherences.statement}.
\end{proof}

\section*{Acknowledgements}
Large language models were used while writing the manuscript for help with grammar and phrasing (Claude, Grok, and Chatgpt).
\section*{Funding}
\label{loc:body.aknowledgements}
Y. Plan is partially supported by an NSERC Discovery Grant (GR009284), an NSERC Discovery Accelerator Supplement (GR007657), and a Tier II Canada Research Chair in Data Science (GR009243). O. Yilmaz was supported by an NSERC Discovery Grant (22R82411) O. Yilmaz also acknowledges support by the Pacific Institute for the Mathematical Sciences (PIMS) and the CNRS -- PIMS International Research Laboratory. 
\section*{Data availability}
The data underlying the numerical experiments in this article are available in the CELEBA dataset~\cite{liuDeepLearningFace2015} and the flower dataset~\cite{nilsbackAutomatedFlowerClassification2008}. No new datasets were generated for this study.

\bibliographystyle{abbrvnat}
\bibliography{bibliography}

\appendix
\renewcommand{\theHsection}{appendix.\thesection}

\section{Equivalence of duplicate-rejection sampling and sequential renormalized sampling}
\label{app:equivalence_sampling}

\begin{proposition}[Equivalence of two without-replacement implementations]
\label{prop:equiv_sampling}
Let $F=\{f_1,\dots,f_n\}$ and let $p=(p_1,\dots,p_n)$ satisfy $p_j\ge 0$ and $\sum_{j=1}^n p_j=1$. Fix $1\le m\le n$. Consider the following two procedures that generate an ordered $m$-tuple of distinct indices $(I_1,\dots,I_m)$.

\begin{enumerate}
\item[(1)] \textbf{Duplicate-rejection (with-replacement) sampling.}
Draw i.i.d.\ random variables $X_1,X_2,\dots$ with $\mathbb P(X_t=j)=p_j$.
Accept a draw if it has not appeared previously among accepted values; otherwise reject it and continue.
Stop after $m$ distinct values have been accepted, and denote the accepted sequence by $(I_1,\dots,I_m)$.

\item[(2)] \textbf{Sequential sampling without replacement with renormalization.}
Set $S_0=\varnothing$. For $r=1,\dots,m$, sample $I_r$ from $\{1,\dots,n\}\setminus S_{r-1}$ according to
\[
\mathbb P(I_r=j \mid S_{r-1})
=
\frac{p_j}{\sum_{\ell\notin S_{r-1}} p_\ell},
\qquad j\notin S_{r-1},
\]
and set $S_r=S_{r-1}\cup\{I_r\}$.
\end{enumerate}

Then the two procedures induce the same distribution on $(I_1,\dots,I_m)$.
\end{proposition}

\begin{proof}
It suffices to show that, in procedure \textup{(1)}, conditional on the current accepted set being $S\subset\{1,\dots,n\}$, the next accepted index has the same conditional distribution as in \textup{(2)}.

In \textup{(1)}, let $q=\sum_{i\in S}p_i$. The next accepted value is the first draw outside $S$. For any $j\notin S$,
\[
\mathbb P(\text{next accepted}=j)
=
\sum_{t=1}^\infty q^{\,t-1}p_j
=
\frac{p_j}{1-q}
=
\frac{p_j}{\sum_{\ell\notin S}p_\ell}.
\]
Thus, conditional on $S$, the next accepted index is drawn from $S^c$ with probabilities proportional to $p_j$, exactly as specified in \textup{(2)}.

Since the first accepted index has distribution $p$ in both procedures and the same conditional rule governs each subsequent step, the joint law of $(I_1,\dots,I_m)$ coincides by induction.
\end{proof}

\section{Python code}
\label{loc:body.proofs.python_code}

\begin{lstlisting}[language=Python, caption={Computation of the optimized Bernoulli probability weights from a local coherence vector.}, label={lst:optimized-bernoulli}]
def Optimized_Bernoulli_prob_weights(loc_co: NDArray, m: int):
    pos_co = loc_co[loc_co > 0]
    n = pos_co.size
    assert m <= n, "Cannot sample enough measurements."
    if m == n:
        samplings = np.zeros_like(loc_co)
        samplings[loc_co > 0] = 1
        return samplings, 0, 0.0
    sorted_co = np.sort(pos_co.flat)

    def Lsqrd(j, m, n, sorted_co):
        assert j > n - m - 1
        assert m <= n
        assert j <= n - 1
        return m * np.sum(sorted_co[:j + 1] ** 2) / (j - (n - m - 1))

    J = None
    for j in range(n - 1, n - m - 1, -1):
        if m * sorted_co[j] ** 2 < Lsqrd(j, m, n, sorted_co):
            J = j
            break
    if J is None:
        raise ValueError("No valid J found (should never happen).")
    Lsqr = Lsqrd(J, m, n, sorted_co)
    return np.clip(m * loc_co ** 2 / Lsqr, 0, 1), J, Lsqr
\end{lstlisting}
\end{document}